\documentclass[journal]{IEEEtran}
\usepackage{amsmath}
\usepackage{graphicx}
\usepackage{amsfonts}
\usepackage{booktabs}
\usepackage{url}
\usepackage{cite}
\usepackage{amssymb}
\usepackage{diagbox}
\usepackage{subfigure}
\usepackage{comment,soul}
\usepackage{color}
\usepackage{multirow}
\usepackage{mathtools}
\usepackage{algorithmic}
\usepackage{algorithm}
\usepackage{threeparttable}
\usepackage{tablefootnote}

\usepackage{amsthm}
\theoremstyle{definition}
\newtheorem{theorem}{Theorem}

\newtheorem{prop}[theorem]{Proposition}
\newsavebox\CBox
\def\textBF#1{\sbox\CBox{#1}\resizebox{\wd\CBox}{\ht\CBox}{\textbf{#1}}}
% Example definitions.
% --------------------

% Title.
% ------
\title{Rethinking Processing Distortions: Disentangling the Impact of Speech Enhancement Errors on Speech Recognition Performance}%Isolating the Impact of Speech Enhancement Errors That Degrade Speech Recognition Performance}
%
% Single address.
% ---------------
\author{Tsubasa Ochiai~\IEEEmembership{Member,~IEEE}, Kazuma Iwamoto,\\ Marc Delcroix~\IEEEmembership{Senior Member,~IEEE}, Rintaro Ikeshita~\IEEEmembership{Member,~IEEE},\\ Hiroshi Sato~\IEEEmembership{Member,~IEEE}, Shoko Araki~\IEEEmembership{Fellow,~IEEE}, Shigeru Katagiri~\IEEEmembership{Fellow,~IEEE}
\thanks{Tsubasa Ochiai, Marc Delcroix, Rintaro Ikeshita, Hiroshi Sato, and Shoko Araki are with NTT Corporation, Chiyoda-ku 100-8116, Japan (email: tsubasa.ochiai@ntt.com; marc.delcroix@ntt.com; rintaro.ikeshita@ntt.com; hrs.sato@ntt.com; shoko.araki@ntt.com).

Kazuma Iwamoto and Shigeru Katagiri are with Doshisha University, Kyotanabe 610-0394, Japan (kazuma.iwamoto1204@gmail.com; skatagir@mail.doshisha.ac.jp).}}

% The paper headers
\markboth{Journal of \LaTeX\ Class Files,~Vol.~?, No.~?, ???~????}%
{Shell \MakeLowercase{\textit{et al.}}: Bare Demo of IEEEtran.cls for IEEE Journals}

\begin{document}
\maketitle
\begin{abstract}

It is challenging to improve automatic speech recognition (ASR) performance in noisy conditions with a single-channel speech enhancement (SE) front-end.
This is generally attributed to the processing distortions caused by the nonlinear processing of single-channel SE front-ends.
However, the causes of such degraded ASR performance have not been fully investigated.
How to design single-channel SE front-ends in a way that significantly improves ASR performance remains an open research question.
In this study, we investigate a signal-level numerical metric that can explain the cause of degradation in ASR performance.
To this end, we propose a novel analysis scheme based on the orthogonal projection-based decomposition of SE errors.
This scheme manually modifies the ratio of the decomposed interference, noise, and artifact errors, and it enables us to directly evaluate the impact of each error type on ASR performance.
Our analysis reveals the particularly detrimental effect of artifact errors on ASR performance compared to the other types of errors.
This provides us with a more principled definition of processing distortions that cause the ASR performance degradation.
Then, we study two practical approaches for reducing the impact of artifact errors.
First, we prove that the simple observation adding (OA) post-processing (i.e., interpolating the enhanced and observed signals) can monotonically improve the signal-to-artifact ratio.
Second, we propose a novel training objective, called artifact-boosted signal-to-distortion ratio (AB-SDR), which forces the model to estimate the enhanced signals with fewer artifact errors.
Through experiments, we confirm that both the OA and AB-SDR approaches are effective in decreasing artifact errors caused by single-channel SE front-ends, allowing them to significantly improve ASR performance.

\end{abstract}
\begin{IEEEkeywords}
single-channel speech enhancement, noise-robust speech recognition, processing distortion
\end{IEEEkeywords}

\IEEEpeerreviewmaketitle
\section{Introduction}
\label{sec:introduction}

%\subsection{Background and motivation}

Over the past decade, the performance of automatic speech recognition (ASR) systems has progressively improved \cite{hinton2012deep,yu2016automatic,li2022recent} with the advent of deep learning techniques \cite{lecun2015deep,goodfellow2016deep}.
Research interest has thus moved toward achieving robustness against acoustic interference such as background noise, reverberation, and overlapping speakers \cite{haeb2020far}.
Recent studies and ASR challenges \cite{barker2015third,barker2018fifth} have reported that multichannel array processing, i.e., mask-based beamformer \cite{higuchi2016robust,heymann2016neural,boeddeker2018exploring,boeddeker2018front}, plays a key role in improving robustness to interference, and it has been used as the de facto standard front-end for noise-robust ASR systems.
However, it is not always possible to equip commercial devices with microphone arrays due to structural constraints and cost restrictions.
Therefore, building noise-robust ASR systems with a single microphone (i.e., single-channel) is an essential research issue.

Recent advances in deep learning have greatly improved the performance of single-channel speech enhancement (SE), such as noise reduction \cite{wang2018supervised, pandey2019tcnn}, source separation \cite{wang2018supervised,luo2019conv}, and target speech extraction \cite{zmolikova2023neural}.
Single-channel SE approaches can improve SE performance measures, such as signal-to-distortion ratio (SDR) \cite{vincent2006performance}, perceptual evaluation of speech quality (PESQ) \cite{rix2001perceptual}, and short-time objective intelligibility (STOI) \cite{taal2011algorithm}.
However, previous studies (e.g., \cite{yoshioka2015ntt,chen2018building,fujimoto2019one,menne2019investigation,chang2022end-to-end}) have reported that such single-channel SE approaches do not necessarily contribute to improving the ASR performance for the ASR systems trained with multi-condition training data \cite{vincent2017analysis} (i.e., they tend to only slightly improve or even degrade word error rate (WER)).

A common view in the ASR research community is that \emph{processing distortions} produced by single-channel nonlinear processing, like neural networks, induce such ASR performance degradation.
Researchers have focused on developing approaches to mitigate the effects of such potential distortions.
Such approaches include retraining the ASR back-end on enhanced speech (e.g., \cite{chen2018building}) or joint training of the SE front-end and ASR back-end (e.g., \cite{menne2019investigation,chang2022end-to-end,woo2020end,von2020multi,wang2016joint}).
However, it is not always possible to tune the ASR back-end for a specific SE front-end, due to the need to use an already deployed ASR back-end and the cost of training and maintaining an ASR system for different SE front-ends, etc.
Therefore, modular ASR systems are often preferable in practice, and such systems require the design of effective standalone SE front-ends. 

How to design single-channel SE front-ends that can significantly improve ASR performance remains an open research question.
We believe that a key to answering this question is a better understanding of the nature of processing distortions that are detrimental to ASR.
Surprisingly, there have been no detailed systematic analyses or interpretations of the processing distortions and their impact on ASR.
We aim to fill this gap in this study.
In particular, we investigate a signal-level numerical metric that could explain, at least partly, the cause of ASR performance degradation.

%\subsection{Contributions: Hypotheses and proposals}
\subsection{Hypotheses on cause of ASR performance degradation}

An SE model estimates the target clean source given the observed noisy signal.
The processing distortions can be measured as the total estimation errors between the reference source signal and the enhanced signal.
We hypothesize that the total estimation errors can be decomposed into \emph{natural} and \emph{unnatural} error components.
Natural errors would consist of residual interference and noise signals, which are similar to noisy samples seen when training the ASR back-end with multi-condition data.
Consequently, such natural (residual) errors may have little impact on ASR performance.
In contrast, unnatural errors consist of processing artifacts caused by single-channel nonlinear processing.
Such unnatural (processing) errors may have a more detrimental effect on ASR because it is difficult to train an ASR back-end that is robust to these artifacts due to their large diversity.
A signal-level numerical metric is required to separate the total estimation error into natural and unnatural error components.

To define the error components, we borrow the orthogonal projection-based decomposition of the SE estimation errors, which was previously proposed as an SE performance metric for speech denoising and separation, such as the SDR of BSSEval \cite{vincent2006performance}.
It decomposes the SE errors into three error components, i.e., the residual \emph{interference} error, the residual \emph{noise} error, and the \emph{artifact} error.
The interference error component consists of a linear combination of target speech and interference signals, while the noise error component consists of a linear combination of target speech, interference, and noise signals.
Accordingly, the interference and noise errors represent the signals that could be observed in the real world, i.e., natural signals.
On the other hand, the artifact error component consists of a signal orthogonal to the target speech, interference, and noise signals (Figure~\ref{fig:decomposition}-(a)); in other words, it cannot be represented by a linear combination of target speech, interference, and noise signals.
Thus, the artifact errors represent the signals that are produced by nonlinear processing with single-channel SE, which can be considered unnatural (artificial) signals.
Considering the above observations, we \emph{hypothesize} that the artifact errors have a larger impact on the degradation of ASR performance than on the other residual interference and noise errors.
In other words, we assume that the artifact errors could mostly explain the limited ASR performance improvement induced by single-channel SE.

\subsection{Contributions: Experimental proof of impact of artifact errors and practical approaches to mitigate them}

In this study, to confirm this hypothesis, we propose a novel analysis scheme that decomposes the SE errors by utilizing the reference signals (i.e., target speech, interfering speech, an background noise), manually modifies the ratio of the interference, noise, and artifact errors included in the estimated signals, and directly evaluates the ASR performance of the modified signals.
We refer to the proposed analysis scheme as direct scaling analysis (DSA).
This analysis enables us to directly investigate the impact of each SE error on the ASR performance and proves the particularly negative impact of artifact errors on ASR performance as well as the relatively limited impact of the interference and noise errors.

The results of this analysis motivate us to explore two practical approaches to mitigate the effect of the artifact errors generated by the single-channel SE for improving the ASR performance: 1) observation adding (OA) and 2) artifact-boosted signal-to-distortion ratio (AB-SDR) loss.

The OA post-processing interpolates the enhanced signal and the observed noisy signal, and it is applied in the inference stage as the post-processing of the single-channel SE.
In the literature, it has often been used to reduce the auditory nonlinear distortion (i.e., musical noise \cite{cappe1994elimination}) of the enhanced signals generated by classical spectral subtraction \cite{boll1979suppression} and binary masking \cite{lyon1983computational}.
However, to the best of our knowledge, OA has not been investigated as an approach to improving the ASR performance of recent single-channel SE systems.
Moreover, its effect on decomposed SE errors has not been considered.
In this paper, we mathematically and experimentally show that the OA post-processing has an effect on reducing the ratio of artifact errors, which contributes to improving ASR performance based on our hypothesis.

We also propose a novel training objective for an SE front-end, i.e., AB-SDR loss, which extends the conventional SDR loss to put more weight on the artifact errors than on the noise and interfering errors.
Unlike the OA post-processing, AB-SDR loss is applied in the training stage as the training objective.
We experimentally show that AB-SDR successfully reduces the ratio of artifact errors and improves the ASR performance of single-channel SE.

Although the two approaches try to reduce the artifact errors in different ways, we show experimentally that both of them successfully reduce artifact errors at the cost of increasing other types of errors less harmful to ASR and consequently improve ASR performance.
These experiments also provide experimental evidence of our hypothesis that the (unnatural) artifact errors have a larger impact on the degradation of ASR performance compared to the other (natural) residual interference and noise errors.

The main contributions of this paper can be summarized as follows:
\begin{enumerate}
    \item We propose a novel analysis scheme (i.e., DSA) to confirm the validity of our hypothesis that the artifact errors should be one of the reasonable signal-level numerical metrics representing the cause of ASR performance degradation, which has not been explored in the noise-robust ASR community.
    %We propose a novel analysis scheme (i.e., DSA) to confirm the validity of our hypothesis that the artifact errors represent one of the major cause of ASR performance degradation. This analysis is based on a signel-level metric measuring explicitly artifact errors, which is new to the robust ASR community.
    %We propose a novel analysis scheme (i.e., DSA) to confirm the validity of our hypothesis that the artifact errors should be one of the reasonable signal-level numerical metrics representing the cause of ASR performance degradation that have not been explicitly defined in the research community.
    \item The experimental analyses provide a novel insight into the causes of the limited ASR performance improvement induced by single-channel SE: 1) artifact errors have a larger impact on ASR than interference and noise errors and 2) ASR performances of single-channel SE could be improved by mitigating artifact errors.
    \item We propose two practical approaches to mitigate the impact of the artifact errors: 1) the OA post-processing and 2) the AB-SDR training objective.
    The experimental results show the effectiveness of these approaches in improving the ASR performance of single-channel SE, even for real recordings.
    %\item The experimental results show that the proposed OA and AB-SDR approaches are practically effective in mitigating the impact of the artifact errors and consequently improve the ASR performance of single-channel SE even for real recordings.
\end{enumerate}

This paper extends our previous study \cite{iwamoto2022how} by 1) generalizing the derivation and proof from a single-talker to a multi-talker scenario, 2) proposing the AB-SDR training objective as a more straightforward approach to reducing the ratio of artifact errors, and 3) performing comprehensive experiments including different acoustic conditions, evaluation corpora, and SE/ASR systems.

The remainder of this paper is summarized as follows.
In Section~\ref{sec:related}, we briefly discuss related prior works.
Section~\ref{sec:background} overviews the orthogonal projection-based decomposition of the SE errors.
Section~\ref{sec:proposed} explains the details of 1) the proposed DSA scheme, 2) the OA approach, and 3) the AB-SDR approach.
In Sections~\ref{sec:exp_cond}--\ref{sec:exp_add}, we explain the experiments of the single-channel SE/ASR setups in noisy environments and give novel insights into the processing distortions.
Finally, we conclude this paper in Section~\ref{sec:conclusion}.

\section{Related works}
\label{sec:related}

\subsection{Observation adding}

Concurrently with our preliminary study \cite{iwamoto2022how}, the authors of another study \cite{zorila2022speaker} proposed using the OA post-processing to alleviate the processing distortions introduced by an SE front-end and reported an improvement in ASR performance.
However, they did not analyze how OA post-processing could improve ASR performance.

Several studies \cite{wang2020voicefilter,sato2021should,sato2022learning} have investigated approaches to reducing the effect of processing distortion on noisy multi-speaker mixtures.
They hypothesized it would be better to recognize noisy speech directly when no interference speaker was active and enhanced speech otherwise.
They proposed switching the input of the ASR back-end between the enhanced and observed signals according to the interference and noise conditions.
As a formalization, they adopted the weighted sum of the enhanced and observed signals, which is similar to that of the OA post-processing, where the switching weights are dynamically computed based on the overlap detector or switcher.
They showed the effectiveness of such switching schemes on ASR performance.
However, the effectiveness of the OA post-processing alone (i.e., simply interpolating the enhanced and observed signals with a pre-defined interpolation weight) on ASR performance has not been fully proven.

In contrast to these prior works, this study offers a hypothesis on the mathematical definition for the cause of ASR performance degradation and experimentally confirms the validity of our hypothesis with the proposed novel analysis scheme.

\subsection{Artifact-aware training objective}

The authors of a previous study \cite{boeddeker2021convolutive} proposed using the SDR of BSSEval \cite{vincent2006performance} as the training objective for neural network-based SE models.
Because the SDR loss minimizes the total estimation errors (i.e., the sum of interference, noise, and artifact errors), it does not allow balancing the contribution of each error type within the loss.

Several studies \cite{venkataramani2018performance,koizumi2022snri} have investigated adding the signal-to-artifact ratio (SAR) \cite{vincent2006performance} to their training objective.
The former study \cite{venkataramani2018performance} focused on evaluation in terms of subjective listening quality but did not investigate the impact of artifact errors on ASR performance.
The latter study \cite{koizumi2022snri} investigated the joint training framework with the ASR-level training objective and adopted the SAR as an additional regularization term.
However, the effectiveness of the SAR loss alone has not been confirmed, and thus it has not explicitly shown how the SAR loss without the ASR-level loss contributes to improving ASR performance.
In particular, in our preliminary experiment, we observed that the SAR-based loss may cause training instabilities probably because the SAR becomes optimal when the estimated signal matches the observed noisy signal.
This may indicate that it is challenging to train an SE system without the ASR-level loss.

In contrast to these prior works, this study proposes a novel analysis scheme and, for the first time, explicitly shows that artifact errors negatively impact ASR performance.
Furthermore, rather than adding the SAR-based regularization, we introduce the AB-SDR loss, which is a signal-level training objective that boosts the effect of the artifact errors in the SDR loss and can improve ASR performance without relying on the ASR-level training objective.

\subsection{ASR-level training objectives}

As described in Section~\ref{sec:introduction}, the joint training of the SE front-end and ASR back-end (e.g., optimizing the SE front-end with the ASR-level training objective) would be a straightforward approach to mitigating the processing distortion problem, which could be done without deeply understanding its mechanism.
Although such a joint training scheme successfully improves ASR performance (e.g., \cite{menne2019investigation,chang2022end-to-end,woo2020end,von2020multi,wang2016joint}), it requires modifying the SE front-end and/or the ASR back-end, which is not always an option in practice, e.g., due to the need to use an already deployed system.
Moreover, its effect on SE estimation errors has not been fully explored; the previous studies \cite{woo2020end,von2020multi} confirmed the improvements in the ASR metric but not in SE metrics such as SDR.

\section{Background}
\label{sec:background}

\subsection{Overview of speech enhancement task}

In this study, we focus on the single-channel SE task, where the observed signals are contaminated by the interfering speech and background noise signals.
Let $\mathbf{y} \in \mathbb{R}^{T}$ denote a vector of a $T$-length time-domain waveform of the observed signals.
The observed noisy signals can be modeled as:
\begin{align}
    \mathbf{y} = \mathbf{s} + \mathbf{i} + \mathbf{n},
    \label{eq:obs}
\end{align}
where, $\mathbf{s} \in \mathbb{R}^{T}$ denotes the target speaker's speech signal, $\mathbf{i} \in \mathbb{R}^{T}$ denotes the interfering speaker's speech signal, and $\mathbf{n} \in \mathbb{R}^{T}$ denotes the background noise signal.
In this paper, to simplify the description, we describe the equations assuming a single interfering speaker.

The goal of the SE task is to extract the target speaker's speech signals from the observed noisy signals, i.e., to reduce the interfering speech and background noise signals.
Given the observed signal $\mathbf{y}$ as an input, the SE system estimates the target signal $\widehat{\mathbf{s}} \in \mathbb{R}^{T}$ as:
\begin{align}
    \widehat{\mathbf{s}} = \text{SE}(\mathbf{y}),
\end{align}
where $\text{SE}(\cdot)$ denotes the functional representation of SE processing.

One of the major uses of an SE system is as a front-end for ASR to improve performance in noisy conditions.
Given the enhanced signal as an input, the ASR system estimates the word sequence $\widehat{\mathbf{W}} \in \mathcal{V}^{N}$, where $\mathcal{V}$ denotes a set of symbols and $N$ denotes the sequence length, as:
\begin{align}
    \widehat{\mathbf{W}} = \text{ASR}(\widehat{\mathbf{s}}),
\end{align}
where $\text{ASR}(\cdot)$ denotes the functional representation of ASR processing.

\subsection{Orthogonal projection-based decomposition of speech enhancement errors}
\label{sec:opd}

The enhanced signal inevitably contains estimation errors.
To define SE evaluation measures such as SDR, a previous study \cite{vincent2006performance} proposed decomposing the estimation errors into three error components, i.e., interference, noise, and artifact errors.
Given the reference source, interference, and noise signals ($\mathbf{s}$, $\mathbf{i}$, and $\mathbf{n}$), the estimated signal $\widehat{\mathbf{s}}$ is decomposed as:
\begin{align}
    \widehat{\mathbf{s}} = \mathbf{s}_{\text{target}} + \mathbf{e}_{\text{interf}} + \mathbf{e}_{\text{noise}} + \mathbf{e}_{\text{artif}},
    \label{eq:decomp}
\end{align}
where $\mathbf{s}_{\text{target}} \in \mathbb{R}^{T}$ is called a target source component, and $\mathbf{e}_{\text{interf}} \in \mathbb{R}^{T}$, $\mathbf{e}_{\text{noise}} \in \mathbb{R}^{T}$, and $\mathbf{e}_{\text{artif}} \in \mathbb{R}^{T}$ denote the interference, noise, and artifact error components, respectively.
Figure~\ref{fig:decomposition}-(a) illustrates the signal decomposition, where, for illustration simplicity, we assume that there is no interference signal in the observed signal, i.e., $\mathbf{y} = \mathbf{s} + \mathbf{n}$, and that the signal delay is not considered.

\begin{figure}[t]
  \centering
  \includegraphics[width=1.0\linewidth]{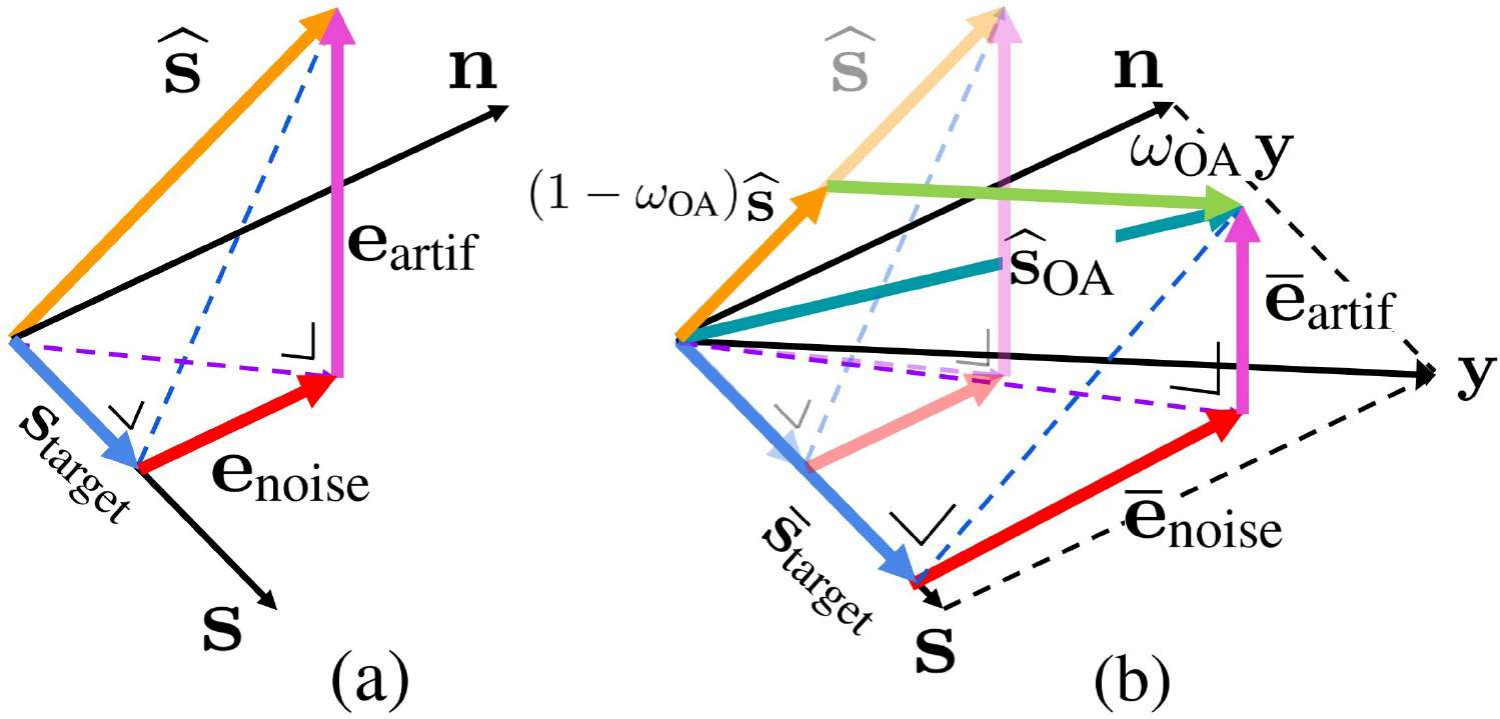}
%\vspace{-3mm}
  \caption{Illustrations of (a) orthogonal projection-based decomposition and (b) effect of observation adding from viewpoint of orthogonal projection-based decomposition.}
%\vspace{-1mm}
  \label{fig:decomposition}
\end{figure}

The above decomposition in Eq.~\eqref{eq:decomp} is defined using orthogonal projections.
Let $\mathbf{s}^{\tau}$, $\mathbf{i}^{\tau}$, and $\mathbf{n}^{\tau}$ be the source, interference, and noise signals delayed by $\tau$ samples.
Three orthogonal projection matrices are defined for the decomposition: 1) orthogonal projection matrix onto the subspace spanned by the source signals $\{ \mathbf{s}^{\tau} \}_{\tau = 0}^{L-1}$ (denoted by $\mathbf{P}_{\mathbf{s}} \in \mathbb{R}^{T \times T}$), 2) orthogonal projection matrix onto the subspace spanned by the source and interference signals $\{ \mathbf{s}^{\tau}, \mathbf{i}^{\tau} \}_{\tau = 0}^{L-1}$ (denoted by $\mathbf{P}_{\mathbf{s}, \mathbf{i}} \in \mathbb{R}^{T \times T}$), and 3) orthogonal projection matrix onto the subspace spanned by the source, interference, and noise signals $\{ \mathbf{s}^{\tau}, \mathbf{i}^{\tau}, \mathbf{n}^{\tau} \}_{\tau = 0}^{L-1}$ (denoted by $\mathbf{P}_{\mathbf{s}, \mathbf{i}, \mathbf{n}} \in \mathbb{R}^{T \times T}$), where $L-1$ is the number of maximum delay allowed or, in other words, $L$ is the number of basis vectors for the projection.
These matrices can be obtained as follows:
\begin{align}
    \mathbf{P}_{\mathbf{s}} &\coloneqq \mathbf{A}_{\mathbf{s}}(\mathbf{A}_{\mathbf{s}}^{\mathsf{T}} \mathbf{A}_{\mathbf{s}})^{-1} \mathbf{A}_{\mathbf{s}}^{\mathsf{T}}, \\
    \mathbf{P}_{\mathbf{s},\mathbf{i}} &\coloneqq \mathbf{A}_{\mathbf{s}, \mathbf{i}}(\mathbf{A}_{\mathbf{s}, \mathbf{i}}^{\mathsf{T}} \mathbf{A}_{\mathbf{s}, \mathbf{i}})^{-1} \mathbf{A}_{\mathbf{s}, \mathbf{i}}^{\mathsf{T}}, \\
    \mathbf{P}_{\mathbf{s},\mathbf{i},\mathbf{n}} &\coloneqq \mathbf{A}_{\mathbf{s},\mathbf{i},\mathbf{n}}(\mathbf{A}_{\mathbf{s},\mathbf{i},\mathbf{n}}^{\mathsf{T}} \mathbf{A}_{\mathbf{s},\mathbf{i},\mathbf{n}})^{-1} \mathbf{A}_{\mathbf{s},\mathbf{i}, \mathbf{n}}^{\mathsf{T}},
\end{align}
where
\begin{align}
\mathbf{A}_{\mathbf{s}} &\coloneqq [ \mathbf{s}^{0}, \ldots, \mathbf{s}^{L-1} ],\nonumber\\
\mathbf{A}_{\mathbf{s}, \mathbf{i}} &\coloneqq [ \mathbf{s}^{0}, \ldots, \mathbf{s}^{L-1}, \mathbf{i}^{0}, \ldots, \mathbf{i}^{L-1} ],\nonumber\\
\mathbf{A}_{\mathbf{s}, \mathbf{i}, \mathbf{n}} &\coloneqq [ \mathbf{s}^{0}, \ldots, \mathbf{s}^{L-1}, \mathbf{i}^{0}, \ldots, \mathbf{i}^{L-1}, \mathbf{n}^{0}, \ldots, \mathbf{n}^{L-1} ]. \nonumber
\end{align}
Then, the decomposed components in Eq.~\eqref{eq:decomp} ($\mathbf{s}_{\text{target}}$, $\mathbf{e}_{\text{interf}}$, $\mathbf{e}_{\text{noise}}$, and $\mathbf{e}_{\text{artif}}$) are defined based on the projection matrices as:
\begin{align}
    \mathbf{s}_{\text{target}} &= \mathbf{P}_{\mathbf{s}} \widehat{\mathbf{s}}, \label{eq:target} \\
    \mathbf{e}_{\text{interf}} &= \mathbf{P}_{\mathbf{s},\mathbf{i}} \widehat{\mathbf{s}} - \mathbf{P}_{\mathbf{s}} \widehat{\mathbf{s}},\label{eq:interf} \\
    \mathbf{e}_{\text{noise}} &= \mathbf{P}_{\mathbf{s},\mathbf{i},\mathbf{n}} \widehat{\mathbf{s}} - \mathbf{P}_{\mathbf{s},\mathbf{i}} \widehat{\mathbf{s}}, \label{eq:noise} \\
    \mathbf{e}_{\text{artif}} &= \widehat{\mathbf{s}} - \mathbf{P}_{\mathbf{s},\mathbf{i},\mathbf{n}} \widehat{\mathbf{s}}. \label{eq:artif}
\end{align}
Here, $\mathbf{e}_{\text{interf}}$ and $\mathbf{e}_{\text{noise}}$ represent the residual interference and noise error components that can be expressed as a linear combination of the reference signals (i.e., source, interference, noise signals).
On the other hand, $\mathbf{e}_{\text{artif}}$ represents the error component that cannot be expressed as a linear combination of the reference signals, and thus it was deemed an artifact error that is artificially generated by the SE systems.

\subsection{Evaluation measures}

The decomposition described in Section~\ref{sec:opd} was originally proposed to derive four SE evaluation metrics \cite{vincent2006performance}:
1) SDR, 2) signal-to-interference ratio (SIR), 3) signal-to-noise ratio (SNR), and 4) SAR.
These metrics are defined as follows:
\begin{align}
    \text{SDR} &\coloneqq 10 \log_{10} \frac{\| \mathbf{s}_{\text{target}} \|^{2}}{\| \mathbf{e}_{\text{interf}} + \mathbf{e}_{\text{noise}} + \mathbf{e}_{\text{artif}} \|^{2}}, \label{eq:sdr} \\
    \text{SIR} &\coloneqq 10 \log_{10} \frac{\| \mathbf{s}_{\text{target}} \|^{2}}{\| \mathbf{e}_{\text{interf}} \|^{2}}, \label{eq:sir} \\
    \text{SNR} &\coloneqq 10 \log_{10} \frac{\| \mathbf{s}_{\text{target}} + \mathbf{e}_{\text{interf}} \|^{2}}{\| \mathbf{e}_{\text{noise}} \|^{2}}, \label{eq:snr} \\
    \text{SAR} &\coloneqq 10 \log_{10} \frac{\| \mathbf{s}_{\text{target}} + \mathbf{e}_{\text{interf}} + \mathbf{e}_{\text{noise}} \|^{2}}{\| \mathbf{e}_{\text{artif}} \|^{2}}. \label{eq:sar}
\end{align}
The $\text{SDR}$ metric represents the total error (i.e., equally penalizing each error component) between reference and estimated signals, while the other metrics (i.e., $\text{SIR}$, $\text{SNR}$, $\text{SAR}$) individually evaluate each error component such as interference, noise, and artifact.

\section{Proposed analyses and approaches to reducing artifact errors}
\label{sec:proposed}

\subsection{Controlling SIR, SNR, and SAR by directly scaling error components}
\label{sec:dsa}

To reveal the impact of the different error components (i.e., interference, noise, and artifact errors) on ASR performance, we propose a novel DSA scheme.

Based on the signal decomposition in Eq.~\eqref{eq:decomp}, we generate the modified enhanced signal $\widehat{\mathbf{s}}_{\text{DSA}} \in \mathbb{R}^{T}$ by directly scaling (increasing/decreasing) the error components $\mathbf{e}_{\text{interf}}, \mathbf{e}_{\text{noise}}, \mathbf{e}_{\text{artif}}$ as:
\begin{align}
    \widehat{\mathbf{s}}_{\text{DSA}} &= \mathbf{s}_{\text{target}} + \omega_{\text{interf}}\ \mathbf{e}_{\text{interf}} + \omega_{\text{noise}}\ \mathbf{e}_{\text{noise}} + \omega_{\text{artif}}\ \mathbf{e}_{\text{artif}},\label{eq:dsa}
\end{align}
where $\omega_{\text{interf}}$, $\omega_{\text{noise}}$, and $\omega_{\text{artif}}$ are the scaling parameters that control the amounts of interference, noise, and artifact error components.
By changing the scaling parameters, we can obtain enhanced signals with different levels of each error component $\mathbf{e}_{\text{interf}}, \mathbf{e}_{\text{noise}}, \mathbf{e}_{\text{artif}}$ while retaining the same target source component $\mathbf{s}_{\text{target}}$.

Algorithm~\ref{alg:dsa} summarizes the processing flow of DSA, where $\text{OPD}(\cdot)$ denotes the functional representation of the orthogonal projection-based decomposition (i.e., Eq.~\eqref{eq:target}--\eqref{eq:artif}) and $\text{WER}(\cdot)$ denotes the functional representation of the WER (i.e., Levenshtein distance) computation.
Applying DSA assumes that the observed noisy signal $\mathbf{y}$, reference source, interference, and noise signals $(\mathbf{s}, \mathbf{i}, \mathbf{n})$, and reference transcription $\mathbf{W}$ are available.

\begin{figure}[!t]
\begin{algorithm}[H]
    \caption{Processing flow of direct scaling analysis}
    \label{alg:dsa}
    \begin{algorithmic}[1]
    \REQUIRE evaluation dataset $D$ (each element $d \in D$ is a set of the observed signal, reference signals, and reference transcription, i.e., $d = \{\mathbf{y}, \mathbf{s}, \mathbf{i}, \mathbf{n}, \mathbf{W}\}$), trained SE model $\text{SE}(\cdot)$, trained ASR model $\text{ASR}(\cdot)$
    \ENSURE set of evaluated scores $\text{Scores}$
    \STATE{$\text{Scores} \gets \emptyset$}
    \FORALL{$\{\mathbf{y}, \mathbf{s}, \mathbf{i}, \mathbf{n}, \mathbf{W}\} \in D$}
        \STATE{$\widehat{\mathbf{s}} = \text{SE}(\mathbf{y})$}
        \STATE{$\mathbf{s}_{\text{target}}, \mathbf{e}_{\text{interf}}, \mathbf{e}_{\text{noise}}, \mathbf{e}_{\text{artif}} = \text{OPD}(\widehat{\mathbf{s}}, \mathbf{s}, \mathbf{i}, \mathbf{n})$}
        \FORALL{$\omega_{\text{interf}} \in \{0.1, 0.2, \ldots, 1.5 \}$}
            \FORALL{$\omega_{\text{noise}} \in \{0.1, 0.2, \ldots, 1.5 \}$}
                \FORALL{$\omega_{\text{artif}} \in \{0.1, 0.2, \ldots, 1.5 \}$}
                    \STATE{$\hspace{-0.5mm} \widehat{\mathbf{s}}_{\text{DSA}} = \mathbf{s}_{\text{target}} + \omega_{\text{interf}}\ \mathbf{e}_{\text{interf}} + \omega_{\text{noise}}\ \mathbf{e}_{\text{noise}} + \omega_{\text{artif}}\ \mathbf{e}_{\text{artif}}$}
                    \STATE{$\hspace{-0.5mm} \widehat{\mathbf{W}}_{\text{DSA}} = \text{ASR}(\widehat{\mathbf{s}}_{\text{DSA}})$}
                    \STATE{$\hspace{-0.5mm} S_{\text{DSA}} =\text{WER}(\mathbf{W}, \widehat{\mathbf{W}}_{\text{DSA}})$}
                    \STATE{$\hspace{-0.5mm} \text{Scores} \gets \text{Scores} \cup \{(\omega_{\text{interf}}, \omega_{\text{noise}}, \omega_{\text{artif}}, S_{\text{DSA}})\}$}
                \ENDFOR
            \ENDFOR
        \ENDFOR
    \ENDFOR
    \RETURN{$\text{Scores}$}
    \end{algorithmic}
\end{algorithm}
%\vspace{-3mm}
\end{figure}

Given the observed and reference signals, we first obtain the enhanced signals $\widehat{\mathbf{s}}$ by applying the SE system to the observed noisy signal $\mathbf{y}$ (line 3) and then obtain the decomposed signals $\mathbf{s}_{\text{target}}, \mathbf{e}_{\text{interf}}, \mathbf{e}_{\text{noise}}, \mathbf{e}_{\text{artif}}$ by applying the orthogonal projection-based decomposition (line 4).
By changing the scaling parameters $\omega_{\text{interf}}, \omega_{\text{noise}}, \omega_{\text{artif}}$ (lines 5, 6, and 7), we obtain the modified enhanced signals $\widehat{\mathbf{s}}_{\text{DSA}}$ with various amounts of each error component (line 8).
By inputting such modified enhanced signals to the ASR system (line 9) and evaluating the ASR performance (line 10), we can directly measure the impact of each error component on ASR performance.

Since DSA requires having access to the reference signals, it cannot be applied to practical use scenes, i.e., real recordings.
Here, we aim to identify how each error component impacts ASR performance.

%\vspace{-5mm}
\subsection{Improving SAR with observation adding post-processing}
\label{sec:oa}

As a practical approach to mitigating artifact errors, we consider the OA post-processing.
In this paper, we propose an interpretation of OA from the viewpoint of orthogonal projection-based decomposition and also give a mathematical proof that OA can monotonically increase the SAR under a mild condition.

\vskip0.5\baselineskip
\subsubsection{Formulation}

OA is a simple technique that interpolates the enhanced signal $\widehat{\mathbf{s}}$ with the observed signal $\mathbf{y}$ as\footnote{An alternative implementation of the OA post-processing consists of simply adding back a scaled version of the observed signal to the enhanced signal \cite{iwamoto2022how} as: $\widehat{\mathbf{s}}_{\text{OA}} = \widehat{\mathbf{s}} + \omega_{\text{OA}}\ \mathbf{y}$. These two implementations are equivalent except for the scale of the resultant modified signals. Note that the scale difference does not affect the orthogonal projection-based SE metrics (i.e., Eqs.~\eqref{eq:sdr}--\eqref{eq:sar}).}:
\begin{align}
    \widehat{\mathbf{s}}_{\text{OA}} &= (1-\omega_{\text{OA}})\ \widehat{\mathbf{s}} + \omega_{\text{OA}}\ \mathbf{y},\label{eq:oa_int}
\end{align}
%
%where $\widehat{\mathbf{s}}_{\text{OA}} \in \mathbb{R}^{T}$ denotes the modified enhanced signal with OA, and $0 \leq \omega_{\text{OA}} \leq 1$ is an interpolation parameter that controls the balance between the enhanced and observed signals.
where $\widehat{\mathbf{s}}_{\text{OA}} \in \mathbb{R}^{T}$ denotes the modified enhanced signal with OA, and $0 < \omega_{\text{OA}} < 1$ is an interpolation parameter that controls the balance between the enhanced and observed signals.
Note that setting $\omega_{\text{OA}} = 0$ gives the original enhanced signal, and $\omega_{\text{OA}} = 1$ the observed noisy signal.

Figure~\ref{fig:decomposition}-(b) illustrates the OA's behavior from the viewpoint of error decomposition.
In the figure, $\overline{\mathbf{s}}_{\text{target}}$, $\overline{\mathbf{e}}_{\text{noise}}$, and $\overline{\mathbf{e}}_{\text{artif}}$ denote the target source, noise error, and artifact error components of the modified enhanced signal with OA, respectively.
OA increases the target source and noise error components but decreases the artifact error component.

Unlike DSA, OA does not require having access to the reference signals and thus can be applied to real recordings.

\vskip0.5\baselineskip
\subsubsection{Theoretical analysis}

We show in the next proposition that OA improves the SAR of the enhanced signal $\widehat{\mathbf{s}}$ under a mild condition.
\begin{prop}
\label{prop:OA}
The OA operation in Eq~\eqref{eq:oa_int} improves the SAR of the original enhanced signal $\widehat{\mathbf{s}} = \mathrm{SE}(\mathbf{y})$ if 
it satisfies $\langle \widehat{\mathbf{s}}, \mathbf{y} \rangle > 0$.
\end{prop}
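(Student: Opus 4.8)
The plan is to reduce everything to one structural fact: the observed signal $\mathbf{y} = \mathbf{s}+\mathbf{i}+\mathbf{n}$ lies in the subspace spanned by $\{\mathbf{s}^{\tau},\mathbf{i}^{\tau},\mathbf{n}^{\tau}\}_{\tau=0}^{L-1}$ (it is the $\tau=0$ combination), so the orthogonal projector satisfies $\mathbf{P}_{\mathbf{s},\mathbf{i},\mathbf{n}}\mathbf{y}=\mathbf{y}$ and $\mathbf{y}$ carries zero artifact component. Since the three projectors in Eqs.~\eqref{eq:target}--\eqref{eq:artif} are linear, I would decompose $\widehat{\mathbf{s}}_{\text{OA}}=(1-\omega_{\text{OA}})\widehat{\mathbf{s}}+\omega_{\text{OA}}\mathbf{y}$ componentwise. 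Writing $\mathbf{c}\coloneqq\mathbf{s}_{\text{target}}+\mathbf{e}_{\text{interf}}+\mathbf{e}_{\text{noise}}=\mathbf{P}_{\mathbf{s},\mathbf{i},\mathbf{n}}\widehat{\mathbf{s}}$ for the ``in-subspace'' part of $\widehat{\mathbf{s}}$, linearity together with $\mathbf{P}_{\mathbf{s},\mathbf{i},\mathbf{n}}\mathbf{y}=\mathbf{y}$ shows that the artifact component of $\widehat{\mathbf{s}}_{\text{OA}}$ is exactly $(1-\omega_{\text{OA}})\,\mathbf{e}_{\text{artif}}$, while its in-subspace part (the numerator of the SAR) is $(1-\omega_{\text{OA}})\,\mathbf{c}+\omega_{\text{OA}}\,\mathbf{y}$.

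Substituting into Eq.~\eqref{eq:sar},
\[ \text{SAR}(\widehat{\mathbf{s}}_{\text{OA}})=10\log_{10}\frac{\|(1-\omega_{\text{OA}})\,\mathbf{c}+\omega_{\text{OA}}\,\mathbf{y}\|^{2}}{(1-\omega_{\text{OA}})^{2}\,\|\mathbf{e}_{\text{artif}}\|^{2}}. \]
Because $\omega_{\text{OA}}<1$, factoring $(1-\omega_{\text{OA}})$ out of the norm and setting $u\coloneqq\omega_{\text{OA}}/(1-\omega_{\text{OA}})\in(0,\infty)$ rewrites this as $10\log_{10}\bigl(\|\mathbf{c}+u\,\mathbf{y}\|^{2}/\|\mathbf{e}_{\text{artif}}\|^{2}\bigr)$, which at $u=0$ is precisely the SAR of the original $\widehat{\mathbf{s}}$. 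Since $u$ is a strictly increasing function of $\omega_{\text{OA}}$ on $(0,1)$, it then suffices to show that $u\mapsto\|\mathbf{c}+u\,\mathbf{y}\|^{2}$ is increasing; its derivative is $2\langle\mathbf{c},\mathbf{y}\rangle+2u\|\mathbf{y}\|^{2}$.

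The remaining ingredient is the identity $\langle\mathbf{c},\mathbf{y}\rangle=\langle\widehat{\mathbf{s}},\mathbf{y}\rangle$, which follows from self-adjointness and idempotence of the orthogonal projector together with $\mathbf{P}_{\mathbf{s},\mathbf{i},\mathbf{n}}\mathbf{y}=\mathbf{y}$: $\langle\mathbf{c},\mathbf{y}\rangle=\langle\mathbf{P}_{\mathbf{s},\mathbf{i},\mathbf{n}}\widehat{\mathbf{s}},\mathbf{y}\rangle=\langle\widehat{\mathbf{s}},\mathbf{P}_{\mathbf{s},\mathbf{i},\mathbf{n}}\mathbf{y}\rangle=\langle\widehat{\mathbf{s}},\mathbf{y}\rangle$. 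Hence, under the hypothesis $\langle\widehat{\mathbf{s}},\mathbf{y}\rangle>0$, the derivative above is strictly positive for every $u\geq0$, so $\|\mathbf{c}+u\,\mathbf{y}\|^{2}$ is strictly increasing in $u$ and therefore $\text{SAR}(\widehat{\mathbf{s}}_{\text{OA}})>\text{SAR}(\widehat{\mathbf{s}})$ for all $\omega_{\text{OA}}\in(0,1)$; the same computation simultaneously yields the monotone-improvement statement announced before the proposition.

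I expect the only real subtlety to be the decomposition bookkeeping: one must confirm that the SAR numerator lumps the target, interference, and noise terms together, so that the precise target/interference/noise split of $\mathbf{y}$ never enters and only the vanishing of its artifact part matters, and one should dispose of the degenerate cases — noting that $\langle\widehat{\mathbf{s}},\mathbf{y}\rangle>0$ already forces $\mathbf{y}\neq\mathbf{0}$, and that if $\mathbf{e}_{\text{artif}}=\mathbf{0}$ the SAR is $+\infty$ both before and after OA. Everything else is elementary once the relation $\mathbf{P}_{\mathbf{s},\mathbf{i},\mathbf{n}}\mathbf{y}=\mathbf{y}$ is recognized.
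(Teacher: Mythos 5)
Your proof is correct and follows essentially the same route as the paper's: both hinge on $\mathbf{P}_{\mathbf{s},\mathbf{i},\mathbf{n}}\mathbf{y}=\mathbf{y}$, the resulting $(1-\omega_{\text{OA}})^{2}$ scaling of the artifact term, and self-adjointness of the projector to turn $\langle \mathbf{P}_{\mathbf{s},\mathbf{i},\mathbf{n}}\widehat{\mathbf{s}},\mathbf{y}\rangle$ into $\langle \widehat{\mathbf{s}},\mathbf{y}\rangle$. The only real difference is cosmetic: you reparametrize with $u=\omega_{\text{OA}}/(1-\omega_{\text{OA}})$ and argue via the derivative of $\|\mathbf{c}+u\,\mathbf{y}\|^{2}$, whereas the paper expands the squared norm directly; your variant has the small bonus of making the monotonic increase of SAR in $\omega_{\text{OA}}$ explicit.
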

\begin{proof}
Based on Eqs.~\eqref{eq:target}--\eqref{eq:artif}, \eqref{eq:sar}, and \eqref{eq:oa_int}, the SAR improvement (denoted as $\text{SARi}$) from the original enhanced signal $\widehat{\mathbf{s}}$ to the modified enhanced signal with OA $\widehat{\mathbf{s}}_{\text{OA}}$ can be computed as:
\begin{align}
    \nonumber
    \text{SARi} 
    & = 
        10 \log_{10} \frac{\| \mathbf{P}_{\mathbf{s}, \mathbf{i}, \mathbf{n}} \widehat{\mathbf{s}}_{\text{OA}} \|^{2}}{\| \widehat{\mathbf{s}}_{\text{OA}} - \mathbf{P}_{\mathbf{s},\mathbf{i},\mathbf{n}} \widehat{\mathbf{s}}_{\text{OA}} \|^{2}} \\
        \nonumber
        & \hspace{7mm} - 10 \log_{10} \frac{\| \mathbf{P}_{\mathbf{s}, \mathbf{i}, \mathbf{n}} \widehat{\mathbf{s}} \|^{2}}{\| \widehat{\mathbf{s}} - \mathbf{P}_{\mathbf{s},\mathbf{i},\mathbf{n}} \widehat{\mathbf{s}} \|^{2}},
    \\
    \nonumber
    & = 
        10 \log_{10} \frac{\| (1-\omega_{\text{OA}})\ \mathbf{P}_{\mathbf{s}, \mathbf{i}, \mathbf{n}} \widehat{\mathbf{s}} + \omega_{\text{OA}}\ \mathbf{y} \|^{2}}{(1-\omega_{\text{OA}})^{2}\ \| \widehat{\mathbf{s}} - \mathbf{P}_{\mathbf{s}, \mathbf{i}, \mathbf{n}} \widehat{\mathbf{s}} \|^{2}}
        \\
        \nonumber
        & \hspace{7mm} - 10 \log_{10} \frac{\| \mathbf{P}_{\mathbf{s}, \mathbf{i}, \mathbf{n}} \widehat{\mathbf{s}} \|^{2}}{\| \widehat{\mathbf{s}} - \mathbf{P}_{\mathbf{s},\mathbf{i},\mathbf{n}} \widehat{\mathbf{s}} \|^{2}},
    \\
    \nonumber
    & \hspace{-5mm} =
        10 \log_{10} \frac{\| (1-\omega_{\text{OA}})\ \mathbf{P}_{\mathbf{s}, \mathbf{i}, \mathbf{n}} \widehat{\mathbf{s}} + \omega_{\text{OA}} \mathbf{y} \|^{2}}{(1-\omega_{\text{OA}})^{2}\ \| \mathbf{P}_{\mathbf{s}, \mathbf{i}, \mathbf{n}} \widehat{\mathbf{s}} \|^{2}},
    \\
    \nonumber
    & \hspace{-5mm} =
        10 \log_{10}
        \left[ 
            1
            + 
            \frac{
                \omega_{\text{OA}}^{2} \| \mathbf{y} \|^2 + 2 (1-\omega_{\text{OA}}) \omega_{\text{OA}} \langle \mathbf{P}_{\mathbf{s}, \mathbf{i}, \mathbf{n}} \widehat{\mathbf{s}}, \mathbf{y} \rangle
            }{
                (1-\omega_{\text{OA}})^{2}\ \| \mathbf{P}_{\mathbf{s}, \mathbf{i}, \mathbf{n}} \widehat{\mathbf{s}} \|^2
            }
        \right],
\end{align}
where we use $\mathbf{P}_{\mathbf{s}, \mathbf{i}, \mathbf{n}} \mathbf{y} = \mathbf{y}$ in the second equality.
Due to $0 < \omega_{\text{OA}} < 1$ from OA's definition, $(1-\omega_{\text{OA}}) \omega_{\text{OA}} > 0$.
Thus, $\text{SARi} > 0$ holds if
$\langle \mathbf{P}_{\mathbf{s}, \mathbf{i}, \mathbf{n}} \widehat{\mathbf{s}}, \mathbf{y} \rangle > 0$.
This sufficient condition can be rewritten as
$
\langle \mathbf{P}_{\mathbf{s}, \mathbf{i}, \mathbf{n}} \widehat{\mathbf{s}}, \mathbf{y} \rangle
= 
\langle \widehat{\mathbf{s}}, \mathbf{P}_{\mathbf{s}, \mathbf{i}, \mathbf{n}} \mathbf{y} \rangle
=
\langle \widehat{\mathbf{s}}, \mathbf{y} \rangle  > 0
$,
which concludes the proof.
\end{proof}

The condition $\langle \widehat{\mathbf{s}}, \mathbf{y} \rangle > 0$ holds with 50 \% probability even in the case where $\widehat{\mathbf{s}}$ is a randomly generated vector.
Thus, it should be met even for SE front-ends that are not very accurate, and even more likely to be met for the recent high-quality SE systems.
Assuming such SE systems, we can roughly expect $\widehat{\mathbf{s}} \approx \mathbf{s} + \varepsilon_{i} \mathbf{i} + \varepsilon_{n} \mathbf{n} + \widehat{\mathbf{e}}_{\text{artif}}$ with $\varepsilon_{n} \in \mathbb{R}$ and $\varepsilon_{i} \in \mathbb{R}$, where $|\varepsilon_{n}|$ and $|\varepsilon_{i}|$ are small.
Since ${\mathbf{e}}_{\text{artif}}$ is orthogonal to $\mathbf{s}$, $\mathbf{i}$, and $\mathbf{n}$ (i.e., $\langle \widehat{\mathbf{e}}_{\text{artif}}, \mathbf{s} \rangle = 0$, $\langle \widehat{\mathbf{e}}_{\text{artif}}, \mathbf{i} \rangle = 0$, and $\langle \widehat{\mathbf{e}}_{\text{artif}}, \mathbf{n} \rangle = 0$) and $\mathbf{s}$, $\mathbf{i}$, and $\mathbf{n}$ are generally uncorrelated (i.e., $\langle \mathbf{s}, \mathbf{i} \rangle \approx \mathbf{0}$, $\langle \mathbf{s}, \mathbf{n} \rangle \approx \mathbf{0}$, and $\langle \mathbf{i}, \mathbf{n} \rangle \approx \mathbf{0}$),
we have
$
\langle \widehat{\mathbf{s}}, \mathbf{y} \rangle
\approx 
\langle \mathbf{s} + \varepsilon_{i} \mathbf{i} + \varepsilon_{n} \mathbf{n} + \widehat{\mathbf{e}}_{\text{artif}}, \mathbf{s} + \mathbf{i} + \mathbf{n} \rangle
\approx 
\| \mathbf{s} \|^2 + \varepsilon_{i} \| \mathbf{i} \|^2 + \varepsilon_{n} \| \mathbf{n} \|^2
$.
Since $|\varepsilon_{i}|$ and $|\varepsilon_{n}|$ can be considered sufficiently small, we can expect  
$\langle \widehat{\mathbf{s}}, \mathbf{y} \rangle > 0$
and that OA almost always improves the SAR by Proposition~\ref{prop:OA}.

In addition to the above theoretical analysis, we experimentally investigated the effect of OA.
In our preliminary experiment (Section~\ref{sec:exp_oa_single}), we observed that the SAR scores of the modified enhanced signals with OA $\widehat{\mathbf{s}}_{\text{OA}}$ are better than those of the original enhanced signals $\widehat{\mathbf{s}}$ for all of the evaluated utterances.
Thus, the OA can be used as simple and practical post-processing to improve the SAR of the enhanced signals.

\subsection{Improving SAR with artifact-boosted training loss}

As an alternative approach to mitigating artifact errors and consequently improving ASR performance, we can consider adopting a training objective that reduces artifact errors.
Like OA (but unlike DSA), this scheme assumes that only the observed signal is available in the inference stage and thus can be applied to real recordings.

We assume that paired data of observed noisy signal and reference source, interference, and noise signals $\{\mathbf{y}, \mathbf{s}, \mathbf{i}, \mathbf{n}\}$ are available for training the SE model.
In this paper, as a novel training objective, we propose the AB-SDR loss as:
\begin{align}
    \mathcal{L}_{\text{AB-SDR}} = - 10 \log_{10} \frac{\| \mathbf{s}_{\text{target}} \|^{2}}{\| \mathbf{e}_{\text{interf}} + \mathbf{e}_{\text{noise}} + \alpha \mathbf{e}_{\text{artif}} \|^{2}}, \label{eq:ba_sdr}
\end{align}
%
%where $\alpha \geq 1.0$ denotes the weighting parameter that prioritizes the artifact error component.
where $\alpha > 1.0$ denotes the weighting parameter that prioritizes the artifact error component.
When we set $\alpha = 1.0$, the loss function corresponds to the original SDR loss \cite{vincent2006performance,boeddeker2021convolutive} that equally penalizes each error component (i.e., interference, noise, and artifact errors).
By setting $\alpha$ to a larger value, we boost the effect of artifact errors in the training loss, which drives the model to reduce artifact errors in the estimated signals more than with conventional losses.

An intuitive approach to improving the SAR would be to incorporate regularization for decreasing SAR (i.e., Eq.~\eqref{eq:sar}) in the training objective.
However, in our preliminary experiment, we observed that SAR-based regularization may cause training instability, probably because the SAR becomes optimal when the estimated signal matches the observed noisy signal.
Thus, in this study, we adopt an SDR-based loss that weighs the artifact error component as in Eq.~\eqref{eq:ba_sdr}.

\section{Experimental conditions}
\label{sec:exp_cond}

To confirm our hypothesis, i.e., the artifact errors have a larger impact on the degradation of ASR performance, we conducted the analytical experiments based on the proposed DSA (Section~\ref{sec:exp_dsa}). 
Then, we evaluated the effect of the proposed approaches to reducing the artifact errors, i.e., OA (Section~\ref{sec:exp_oa}) and AB-SDR (Section~\ref{sec:exp_tl}).
Moreover, we conducted additional analyses in Section~\ref{sec:exp_add}: 1) analysis of the effect of SE architecture and ASR's training data, 2) evaluation with different speech and noise corpora, and 3) evaluation with the real recordings.

\subsection{Evaluated datasets}
\label{sec:eval_data}

We created three datasets of simulated single-channel speech signals under reverberant and noisy conditions.
We used the Wall Street Journal (WSJ0) corpus \cite{paul1992design} and the Corpus of Spontaneous Japanese (CSJ) corpus \cite{maekawa2000spontaneous} for the speech signals, and used the CHiME-3 corpus \cite{barker2015third} and the DEMAND corpus \cite{thiemann2013diverse} for the noise signals.
We summarize the basic information of the evaluated datasets in
Table~\ref{tab:data}.

For the WSJ0 speech corpus, the speech sources for the training set were selected from the WSJ0's training set ``si\_tr\_s.''
Those for the development and evaluation sets were selected from the WSJ0's development set ``si\_dt\_05'' and evaluation set ``si\_et\_05.''
For the CSJ speech corpus, the speech sources for the training and development sets were selected from the CSJ's training set.
Those for the evaluation set were selected from the CSJ's evaluation sets ``eval1,'' ``eval2,'' and ``eval3.''
Here, we divided the CSJ's training set into subsets for training and development, containing 95 \% and 5 \%, respectively.

For the CHiME-3 noise corpus, we divided the noise sources into three subsets for training, development, and evaluation, containing 80 \%, 10\%, and 10\%, respectively, of the noise data of each environment (i.e., on a bus (BUS), in a cafe (CAF), a pedestrian area (PED), and at a street junction (STR)).
We used the fifth-channel signals of CHiME-3's noise data for generating noisy speech signals.
Similarly, for the DEMAND  noise corpus, we divided noise sources into three subsets for training, development, and evaluation, containing 70 \%, 10\%, and 20\%, respectively, of the noise data of each noise environment (e.g., living room (DLIVING), university restaurant (PRESTO), meeting room (OMEETING), etc.).
We used the first-channel signals of DEMAND's noise data for generating noisy speech signals.

To create the reverberant speech sources, we generated room impulse response (RIR) based on the image method \cite{allen1979image}, using the gpuRIR simulation toolkit \cite{diaz2021gpurir}.
We randomly generated the RIR configuration (e.g., room geometry, source and array positions, reverberation time, etc.) for each simulated RIR.
We set the reverberation time (T60) between 0.2 s and 0.6 s.

\begin{table}[t]
  \renewcommand{\arraystretch}{1.0}
  \caption{Summary of evaluated datasets.}
%\vspace{3mm}
  \label{tab:data}
  \centering
  \scalebox{1.0}{
  \begin{threeparttable}[h]
  \begin{tabular}{ l | c c c }
    \toprule
    WSJ\_CHIME\_ST & \multirow{2}{*}{\# of utt.} & \multirow{2}{*}{SNR [dB]} & \multirow{2}{*}{SIR [dB]} \\
    (single-talker scenario) \\
    \midrule
    Training set & 30,000 & 0 $\sim$ 10 & -- \\
    Development set & 5,000 & 0 $\sim$ 10 & -- \\
    Evaluation set & 5,000 & 0 & -- \\
    \bottomrule
    \vspace{-1.5mm} \\
    \toprule
    WSJ\_CHIME\_MT & \multirow{2}{*}{\# of utt.} & \multirow{2}{*}{SNR [dB]} & \multirow{2}{*}{SIR [dB]} \\
    (multi-talker scenario) \\
    \midrule
    Training set & 30,000 & 5 $\sim$ 20 & 5 $\sim$ 20 \\
    Development set & 5,000 & 5 $\sim$ 20 & 5 $\sim$ 20 \\
    Evaluation set & 5,000 & 10 & 5 \\
    \bottomrule
    \vspace{-1.5mm} \\
    \toprule
    CSJ\_DEMAND\_ST & \multirow{2}{*}{\# of utt.} & \multirow{2}{*}{SNR [dB]} & \multirow{2}{*}{SIR [dB]} \\
    (single-talker scenario) \\
    \midrule
    Training set & 154,147 & 0 $\sim$ 10 & -- \\
    Development set & 8,112 & 0 $\sim$ 10 & -- \\
    Evaluation set & 3,949 & 0 & -- \\
    \bottomrule
    \vspace{-1.5mm} \\
    \toprule
    CHiME-3's real recordings & \multirow{2}{*}{\# of utt.} & \multirow{2}{*}{SNR [dB]} & \multirow{2}{*}{SIR [dB]} \\
    (single-talker scenario) \\
    \midrule
    Evaluation set & 1,320 & 2 $\sim$ 7\tnote{*} & -- \\    
    \midrule    
  \end{tabular}
  \begin{tablenotes}
    \item[*] The SNR value is estimated \cite{barker2017third} because we do not have access to reference source and noise signals for real recordings.
  \end{tablenotes}
  \end{threeparttable}
  }
%\vspace{-1mm}
\end{table}

\subsubsection{WSJ\_CHIME\_ST and WSJ\_CHIME\_MT}

We created single-talker and multi-talker noisy speech datasets based on the WSJ0's speech and CHiME-3's noise signals; we refer to the former single-talker dataset as ``WSJ\_CHIME\_ST'' and to the latter multi-talker dataset as ``WSJ\_CHIME\_MT.''
The WSJ\_CHIME\_ST dataset is similar to CHiME-3's simulation dataset but with different RIRs.
The original CHiME-3 simulation data contains only single-talker conditions and does not include the reference source and noise signals.
To create multi-talker data, we thus generated RIRs with the image method simulating far-field conditions.
We used these simulation configurations for the single-talker case as well to match the multi-talker case.

For the multi-talker scenario, the generated data consisted of simulated reverberant noisy two-speaker mixtures.
In this experiment, to avoid target speaker ambiguity, we assumed that the target speaker's speech is dominant within the utterance (i.e., the target speaker's speech has a larger power than the interfering speaker's speech).
Thus, the problem consists of extracting the dominant speaker\footnote{There also exists the source separation or target speech extraction setup, which separates all of the sources or extracts a single target source given the speaker enrollment.
In this study, for model simplicity, we focused on the dominant speech extraction setup, which would be one of the practical use cases for multi-talker ASR tasks.} in the two-speaker mixture.

\subsubsection{CSJ\_DEMAND\_ST}

To verify whether the experimental findings hold for different evaluated datasets, we also created another single-talker noisy speech dataset based on CSJ's speech and DEMAND's noise signals, which we refer to as ``CSJ\_DEMAND\_ST.''
``WSJ\_CHIME\_ST'' is created using the WSJ speech (English read speech) and CHiME-3 noise (4 environments) corpora, while ``CSJ\_DEMAND\_ST'' is created using the CSJ speech (Japanese Spontaneous speech) and DEMAND noise (18 environments) corpora.
It contains many more training utterances and a different language, more natural speaking styles, and more diverse noisy environments.

\subsubsection{CHiME-3's real recordings}

We used the simulated datasets to analyze the relations between the SE metrics (SDR, SIR, SNR, SAR \cite{vincent2006performance}) and the ASR metric (WER).
In addition, to confirm the effectiveness of our findings for real recordings, we used the real-recorded evaluation data of the CHiME-3 corpus ``et05\_real''.
Compared to the simulated evaluation sets, the real recordings comprise real reverberation and also cover more variations in terms of SNR condition (2 $\sim$ 7 dB \cite{barker2017third}).

\subsection{Evaluated systems}

\subsubsection{Speech enhancement front-end}

\begin{table}[t]
  \renewcommand{\arraystretch}{1.0}
  \caption{Summary of configurations of speech enhancement systems.}
%\vspace{3mm}
  \label{tab:network}
  \centering
  \scalebox{1.0}{
  \begin{tabular}{ l r }
    \toprule
    Configuration of temporal convolutional network \\
    \midrule
    Number of channels in bottleneck (B) & 128 \\
    Number of channels in skip-connection (Sc) & 128 \\
    Number of channels in convolution blocks (H) & 512 \\
    Kernel size in convolution blocks (P) & 3 \\
    Number of convolution blocks in each repeat (X) & 8 \\
    Number of repeats (R) & 3\\
    \bottomrule
    \vspace{-1.5mm} \\
    \toprule
    Configuration of 1D convolution encoder/decoder \\
    \midrule
    Filter length & 16 samples \\
    Hop size & 8 samples \\
    Number of filters & 512 \\
    \bottomrule
    \vspace{-1.5mm} \\
    \toprule
    Configuration of STFT encoder/decoder \\
    \midrule
    Frame length & 1024 samples \\
    Frame shift & 256 samples \\
    Window function & Hanning \\
    \bottomrule
  \end{tabular}
  }
%\vspace{-1mm}
\end{table}

For the SE front-end, we adopted two types of SE models that work in different input and output domains (i.e., time-domain and short-time Fourier transform (STFT)-domain), representing major categories of SE model architectures.

We employed the temporal convolutional network (TCN)-based single-channel SE architecture \cite{kinoshita2020improving} that is similar to the time-domain audio separation network (TasNet) \cite{luo2019conv}.
The network is composed of an encoder layer, internal TCN blocks, and a decoder layer.
First, the encoder layer maps the time-domain signals to an intermediate representation, and then it is further processed by the convolution blocks.
Finally, the decoder layer directly remaps the intermediate representation to time-domain signals.

For the encoder/decoder mapping, we adopted two types of transformations: 1) learnable transformation with 1D convolution and 1D deconvolution and 2) fixed transformation with STFT and inverse short-time Fourier transform (iSTFT).
The former learnable transformation-based model (Enhan\_T) corresponds to the famous time-domain SE model, i.e., Conv-TasNet \cite{luo2019conv}, and the latter fixed transformation model (Enhan\_F) corresponds to its STFT-domain variant \cite{kinoshita2020improving}.
The configurations related to the encoder/decoder and TCN architecture are briefly summarized in Table~\ref{tab:network}, where we follow the notations introduced in a previous study \cite{luo2019conv}.

For the training loss, we compared three types of loss functions: 1) classical scale-dependent SNR loss \cite{roux2019sdr}, 2) original SDR loss ($\alpha = 1.0$ in Eq.~\eqref{eq:ba_sdr}), and 3) proposed AB-SDR loss ($\alpha > 1.0$ in Eq.~\eqref{eq:ba_sdr}).
We trained all of the models using the Adam optimizer \cite{kingma2015adam} with gradient clipping \cite{pascanu2013difficulty}, with an initial learning rate of 0.001, a batch size of 24, and a maximum of 100 epochs.
%We set the delay allowed in the SDR and AB-SDR losses ($L$) based on the WER scores in preliminary experiments. This resulted in allowing no delay ($L=1$) for the single-talker scenario and one sample delay ($L=2$) for the multi-talker scenario.
We set the number of basis vectors for the projections in the SDR and AB-SDR losses to $L=2$ and $L=1$ for the single-talker and multi-talker scenarios, respectively, based on the WER scores in the preliminary experiments\footnote{Note that setting $L$ to a larger value in the training loss provides too much degree of freedom for the projections, which allows the trained SE model to generate meaningless signals (e.g., high SDR but quite bad WER).}. % high SDR signals but meaningless (quite bad WER) signals. %, which results in obtaining signals with high SDR, but not 
%For SDR and AB-SDR losses, we tuned the number of maximum delay allowed (i.e., number of basis vectors for projections) $L-1$ to 0 for single-talker scenario and to 1 for multi-talker scenario, based on the WER scores in the preliminary experiments.
In addition, for the AB-SDR loss, we tuned the weighting parameter $\alpha$ for values within $\{1.0, 1.5, \ldots, 3.5\}$, and obtained $\alpha= 1.5$ and $\alpha=2.0$ for single-talker and multi-talker scenarios, respectively. 

%In addition, for AB-SDR loss, we tuned the weighting parameter $\alpha$ to 2.0 for single-talker scenario and to 1.5 for multi-talker scenario, by varying it within $\{1.0, 1.5, \ldots, 3.5\}$.
In this study, we implemented the SE systems based on Asteroid's Conv-TasNet implementation \cite{luo2019conv,pariente2020asteroid}.

\subsubsection{Speech recognition back-end}

%For the ASR back-end, we adopted a deep neural network-hidden Markov model (DNN-HMM) hybrid ASR model.
The proposed DSA scheme requires a large number of ASR decoding experiments.
Thus for the ASR back-end, we adopted a computationally efficient deep neural network-hidden Markov model (DNN-HMM) hybrid ASR model\footnote{Parallel to this work, in another study \cite{iwamoto2023does}, we have investigated the effectiveness of OA post-processing for the state-of-the-art end-to-end ASR model \cite{chang2022end-to-end} with a pre-trained self-supervised learning-based feature extractor \cite{chen2022wavlm} trained using a huge amount of speech and noise data, and we confirmed the influence of artifact errors even for such a stronger huge ASR back-end.}.
The system was trained using the lattice-free maximum mutual information (MMI) criterion \cite{povey2016purely} and decoded with a trigram language model.
In this study, we implemented the ASR systems based on Kaldi’s standard recipe \cite{povey2011kaldi} for the CHiME-4 and CSJ tasks.
The details of the systems are shown in Kaldi's recipe\footnote{\url{https://github.com/kaldi-asr/kaldi/tree/master/egs/chime4}}$^, $\footnote{\url{https://github.com/kaldi-asr/kaldi/tree/master/egs/csj}}.

For the training dataset of the acoustic models, we adopted three types of signals: 1) noisy signal (Noisy), 2) enhanced signal (Enhan), and 3) noisy and enhanced signals (Noisy+Enhan).
We obtained the enhanced signals for the training dataset by inputting the noisy signals into the SE systems.
For the third option, we used both noisy and enhanced signals for the training, and thus the total number of training utterances (i.e., iterations) is twice those for the first and second options.

\subsection{Evaluation metrics}

As the evaluation metrics, we used six SE and one ASR measures; 1) SDR, 2) SIR, 3) SNR, 4) SAR, 5) STOI, 6) PESQ, and 7) WER, where higher is better for SDR, SIR, SNR, SAR, STOI, and PESQ, while lower is better for WER.
%For calculating the orthogonal projection-based SE metrics (i.e., SDR, SIR, SNR, and SAR), we extended the de fact standard BSSEval implementation\footnote{\url{https://github.com/sigsep/bsseval}} \cite{vincent2006performance} to accept reference noise signals, where the number of basis vectors is set to $L=512$ by following the original implementation.
The number of basis vectors for calculating the orthogonal projection-based SE metrics (i.e., SDR, SIR, SNR, and SAR) is set to $L=512$ by following the de facto standard BSSEval’s implementation\footnote{\url{https://github.com/craffel/mir_eval}} \cite{vincent2006performance}.
Note that we slightly modified the BSSEval's implementation to accept reference noise signals $\mathbf{n}$ for computing the metrics under noisy conditions.

\section{Experimental analysis by directly scaling error components}
\label{sec:exp_dsa}

\subsection{Single-talker scenario}
\label{sec:exp_dsa_single}

\begin{figure}[t]
  \centering
  \includegraphics[width=1.0\linewidth]{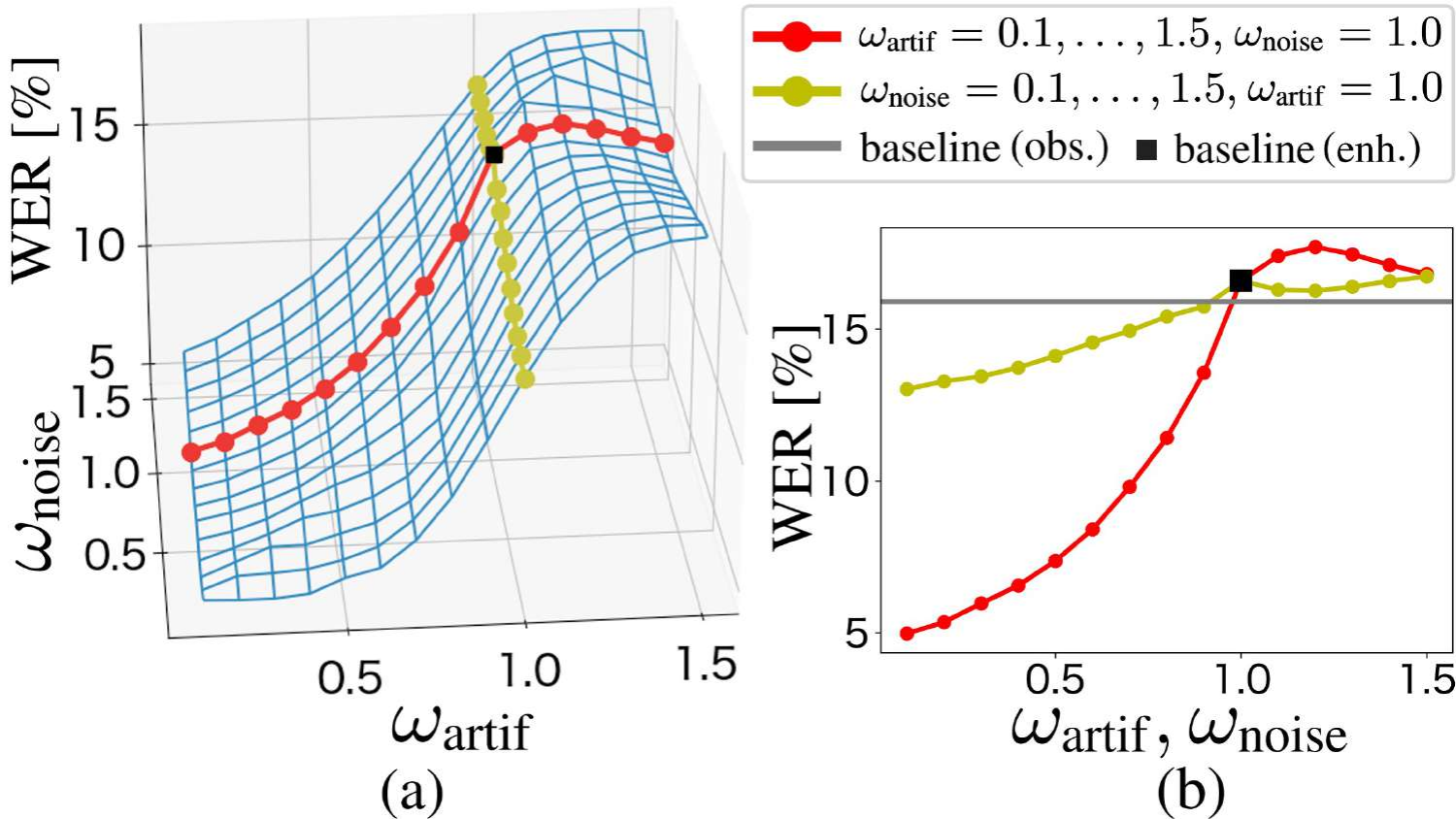}
%\vspace{-2mm}
  \caption{Results of DSA-based evaluation (WER [\%] (lower is better)) for single-talker setup (WSJ\_CHIME\_ST), which modifies the scale of noise ($\omega_{\text{noise}}$) and artifact ($\omega_{\text{artif}}$) error components.}
  \label{fig:dsa_single}
%  \vspace{-1mm}
\end{figure}

Figure~\ref{fig:dsa_single} shows the results of DSA for the single-talker noisy speech dataset, WSJ\_CHIME\_ST.
Here, for the evaluated SE system, we adopt the time-domain SE model (Enhan\_T) trained with the SNR loss.
Note that in the single-talker scenario, the observed signal contains background noise but no interference speakers, and thus there are only two types of SE errors, i.e., noise and artifact\footnote{From Eq.~\eqref{eq:interf}, ${\mathbf{e}_{\text{interf}}}$ becomes zero because the orthogonal projector $\mathbf{P}_{\mathbf{s}, \mathbf{i}}$ is equal to $\mathbf{P}_{\mathbf{s}}$.}.
Figure~\ref{fig:dsa_single}-(a) is a three-dimensional plot showing the relationship between the SE errors (i.e., noise and artifact) and WER scores, which are obtained by the DSA procedure described in Section~\ref{sec:dsa} and Algorithm~\ref{alg:dsa}.
Figure~\ref{fig:dsa_single}-(b) is the corresponding two-dimensional plot, where we vary either the scaling parameters for noise $\omega_{\text{noise}}$ or artifact $\omega_{\text{artif}}$.
Each curve in the two-dimensional plot corresponds to the curve of the same color in the three-dimensional plot.

In the figures, the gray line (baseline (obs.)) represents the baseline score of the observed signals $\mathbf{y}$, and the black square (baseline (enh.)) represents the baseline score of the original enhanced signals $\widehat{\mathbf{s}}$, i.e., $\omega_{\text{noise}} = \omega_{\text{artif}} = 1$.
Similar to the results reported in the previous studies (e.g., \cite{yoshioka2015ntt,chen2018building,fujimoto2019one,menne2019investigation}), Figure~\ref{fig:dsa_single}-(b) shows that the WER score of the original enhanced signals (black square) is worse than that of the observed signals (gray line).

From the figures, we confirm that scaling down the artifact error component greatly improves ASR performance, while scaling down the noise error component has relatively little impact on ASR performance.
These results confirm our hypothesis that, among the two types of SE errors (i.e., noise and artifact), the artifact errors have a larger impact on the degradation of ASR performance.
This notable experimental finding suggests that the artifact error would be a reasonable signal-level numerical metric for explaining the cause of ASR performance degradation.

\subsection{Multi-talker scenario}
\label{sec:exp_dsa_multi}

\begin{figure}[t]
  \centering
  \includegraphics[width=1.0\linewidth]{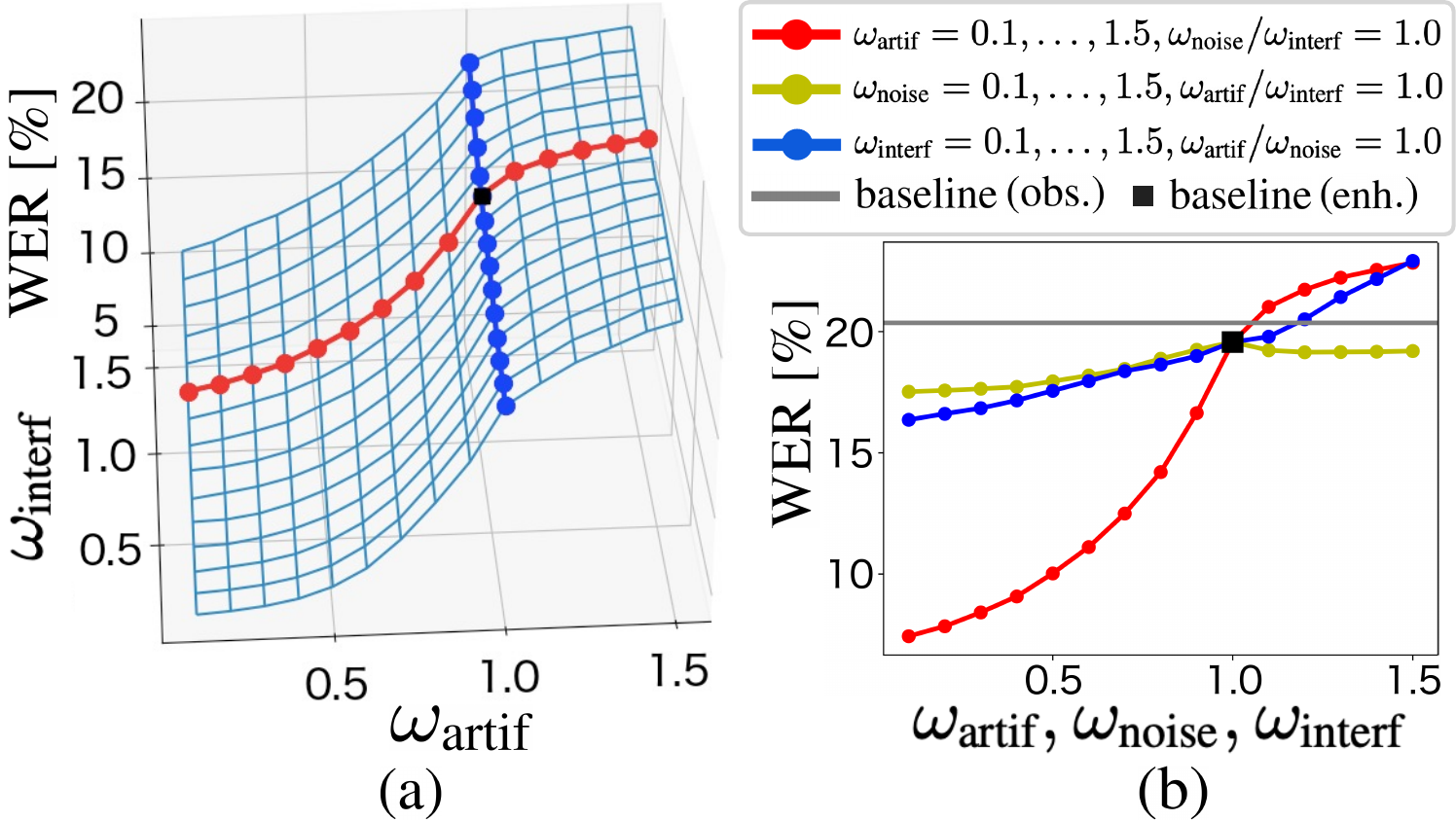}
%\vspace{-2mm}
  \caption{Results of DSA-based evaluation (WER [\%] (lower is better)) for multi-talker setup (WSJ\_CHIME\_MT), which modifies the scale of interference ($\omega_{\text{interf}}$), noise ($\omega_{\text{noise}}$), and artifact ($\omega_{\text{artif}}$) error components.}
  \label{fig:dsa_multi}
%  \vspace{-1mm}
\end{figure}

Figure~\ref{fig:dsa_multi} shows the results of DSA for the multi-talker noisy speech dataset, WSJ\_CHIME\_MT.
Note that in the multi-talker scenario, the observed signal contains interfering speech and background noise, and thus there are three types of SE errors, i.e., interference, noise, and artifact.
Figure~\ref{fig:dsa_multi}-(a) and Figure~\ref{fig:dsa_multi}-(b) are three-dimensional and two-dimensional plots, respectively, showing the relationship between SE errors (i.e., interference, noise, and artifact) and WER scores.
Since we cannot illustrate a four-dimensional plot for the three scaling parameters, we instead show a three-dimensional plot in Figure~\ref{fig:dsa_multi}-(a), where we only vary the two scaling parameters of interference $\omega_{\text{interf}}$ and artifact $\omega_{\text{artif}}$.

From the figures, we confirm that scaling down the artifact error component greatly improves the ASR performance, while scaling down the noise or interference error component has relatively little impact on ASR performance.
This trend is similar to what has been observed in the single-talker scenarios (Section~\ref{sec:exp_dsa_single}).
These results demonstrate that our hypothesis (i.e., artifact errors have a larger impact on the degradation of ASR performance) would hold even for a multi-talker scenario, which would involve more general and challenging acoustic conditions.

\section{Experimental results with practical approaches mitigating artifact errors}
\label{sec:exp_oa_tl}

\subsection{Evaluation of observation adding post-processing}
\label{sec:exp_oa}

\subsubsection{Single-talker scenario}
\label{sec:exp_oa_single}

Figure~\ref{fig:result_oa_st} shows the results of the OA post-processing for the single-talker noisy speech dataset, WSJ\_CHIME\_ST.
Here, for the evaluated SE system, we adopt the time-domain SE model (Enhan\_T) trained with the SNR loss.
Figure~\ref{fig:result_oa_st}-(a) shows the SE metrics (i.e., SDR, SNR, and SAR) and Figure~\ref{fig:result_oa_st}-(b) shows the ASR metric (i.e., WER) of the enhanced signals modified with OA as described in Section~\ref{sec:oa}, where the interpolation parameter $\omega_{\text{OA}}$ varies within $\{0.0, 0.1, \ldots, 1.0\}$.
In the figures, the result of $\omega_{\text{OA}} = 0$ corresponds to the baseline score of the original enhanced signals $\widehat{\mathbf{s}}$ (baseline (enh.)), and that of $\omega_{\text{OA}} = 1$ corresponds to the baseline score of the observed signals $\mathbf{y}$ (baseline (obs.)).

The figure shows that when $\omega_{\text{OA}}$ increases (i.e., increasing the proportion of the observed signal in $\widehat{\mathbf{s}}_{\text{OA}}$), the SDR and SNR decrease while the SAR monotonically increases.
Moreover, we can confirm that, by adjusting $\omega_{\text{OA}}$ around $0.5$ (i.e., adding the observed signal to the original enhanced signal in similar proportion), the modified enhanced signals with OA achieve better WER than both the baseline observed signals (gray line) and the original enhanced signals (black square).

\subsubsection{Multi-talker scenario}
\label{sec:exp_oa_multi}

\begin{figure}[t]
  \centering
  \includegraphics[width=1.0\linewidth]{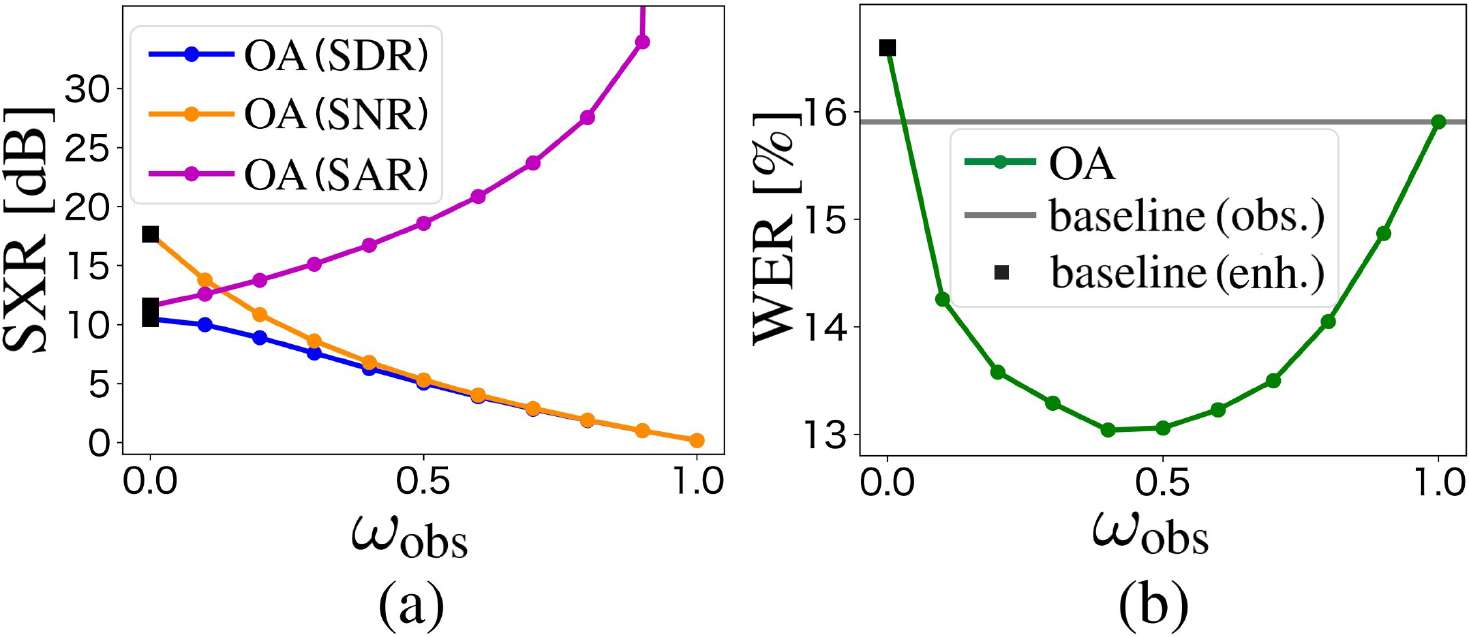}
%\vspace{-2mm}
  \caption{Results of OA post-processing (SDR, SNR, SAR [dB] (higher is better) and WER [\%] (lower is better)) for single-talker setup (WSJ\_CHIME\_ST). We use the compact notation SXR, in place of SDR, SNR, and SAR, to denote the ratio between two signals S and X, where X is either the distortion, noise, or artifacts.}
%\vspace{-1mm}
  \label{fig:result_oa_st}
\end{figure}
\begin{figure}[t]
  \centering
  \includegraphics[width=1.0\linewidth]{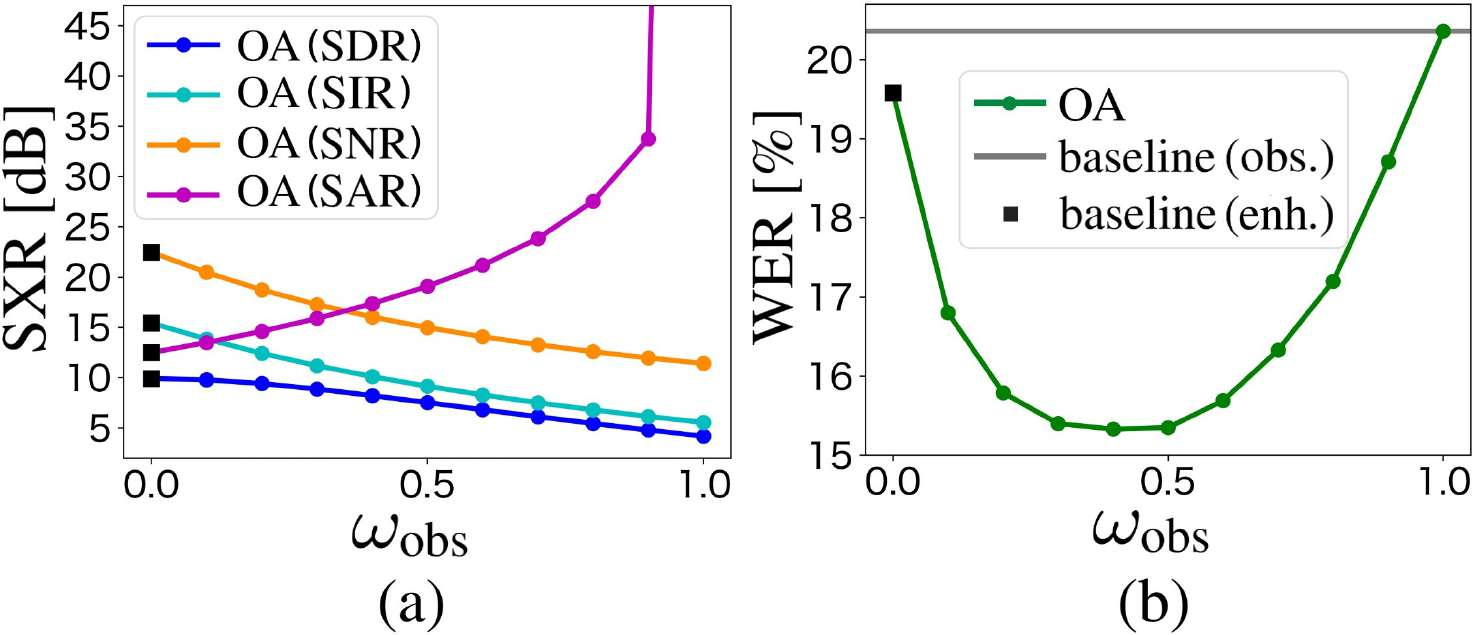}
%\vspace{-2mm}
  \caption{Results of OA post-processing (SDR, SIR, SNR, SAR [dB] (higher is better) and WER [\%] (lower is better)) for multi-talker setup (WSJ\_CHIME\_MT). SXR denotes values encompassing SDR, SIR, SNR, and SAR.}
%\vspace{-1mm}
  \label{fig:result_oa_mt}
\end{figure}

Figure~\ref{fig:result_oa_mt} shows the results of the OA post-processing for the multi-talker noisy speech dataset, WSJ\_CHIME\_MT.
Figure~\ref{fig:result_oa_mt}-(a) and Figure~\ref{fig:result_oa_mt}-(b) show the SE metrics (i.e., SDR, SIR, SNR, and SAR) and the ASR metric (i.e., WER) of the enhanced signals modified with OA, respectively.

In the results of the multi-talker scenario in Figure~\ref{fig:result_oa_mt}, we can observe a similar trend to the results of the single-talker scenario in Figure~\ref{fig:result_oa_st}: the OA post-processing 1) decreases the SDR, SIR, and SNR but monotonically increases the SAR and 2) achieves a better WER score than both the observed (gray line) and enhanced (black square) signals.

These results support our hypothesis that 1) the artifact component has a worse impact on ASR performance compared to the other error components and 2) the ASR performance of single-channel SE front-ends can be improved by decreasing the ratio of artifact errors (or increasing SAR).
Moreover, the results demonstrate that OA post-processing would be an effective and practical approach to mitigating the impact of artifact errors, not only in the single-talker but also in the multi-talker scenario.

\subsection{Evaluation of artifact-boosted training loss}
\label{sec:exp_tl}

\subsubsection{Single-talker scenario}
\label{sec:exp_tl_single}

\begin{table}[t]
  \renewcommand{\arraystretch}{1.0}
  \caption{Results of artifact-boosted training loss (SDR, SNR, SAR [dB], STOI, PESQ (higher is better) and WER [\%] (lower is better)) for single-talker setup (WSJ\_CHIME\_ST).}
%\vspace{-1mm}
  \label{tab:result_tl_st}
  \centering
  \scalebox{0.9}{
  \begin{tabular}{@{}c | wl{0.8cm} wl{0.9cm} wc{0.4cm} | wc{0.4cm} wc{0.4cm} wc{0.4cm} wc{0.4cm} wc{0.4cm} wc{0.4cm} wc{0.4cm}@{}}
    \toprule
    Id & System & Loss & $\omega_{\text{OA}}$ & SDR & SNR & SAR & WER & STOI & PESQ \\
    \midrule
    (1) & clean & -- & -- & -- & -- & -- & 4.4 & -- & -- \\
    (2) & noisy & -- & -- & 0.2 & 0.2 & -- & 15.9 & 0.73 & 1.21 \\
    \midrule
    (3) & Enhan\_T & SNR & -- & 10.5 & 17.7 & 11.6 & 16.6 & 0.87 & 2.00 \\
    (4) & & SDR & -- & 12.0 & 16.9 & 14.0 & 14.8 & 0.86 & 2.33 \\
    (5) & & AB-SDR & -- & 11.3 & 13.1 & 16.7 & 13.0 & 0.84 & 1.94 \\
    \midrule
    (6) & \ \ +OA & SNR & 0.4 & 6.3 & 6.8 & 16.7 & 13.0 & 0.83 & 1.53 \\
    (7) & & SDR & 0.4 & 4.1 & 4.4 & 18.8 & 13.1 & 0.77 & 1.25 \\
    (8) & & AB-SDR & 0.1 & 9.9 & 11.1 & 17.1 & \textBF{12.8} & 0.80 & 1.59 \\
    \bottomrule
  \end{tabular}
  }
%\vspace{-1mm}
\end{table}

Table~\ref{tab:result_tl_st} compares evaluated SE systems trained with the conventional SNR and SDR and the proposed AB-SDR losses in terms of SE (i.e., SDR, SNR, SAR, STOI, and PESQ) and ASR (i.e., WER) performance for the single-talker noisy speech dataset, WSJ\_CHIME\_ST.
We used the time-domain SE model (Enhan\_T) in this experiment.
In addition, Table~\ref{tab:result_tl_st} shows the performance of the evaluated SE systems applied with the OA post-processing (Enhan\_T+OA).
Here, we tuned the interpolation parameter $\omega_{\text{OA}}$ for each SE system by varying it within $\{0.0, 0.1, \ldots, 1.0\}$ and set the hyperparameter that achieved the best WER scores on the development set.

From the table, we observe that Enhan\_T trained with AB-SDR loss (row 5) achieved a significantly higher SAR score, at the expense of decreasing the SNR score than Enhan\_T trained with SNR and SDR losses (rows 3 and 4).
As expected, we could confirm that the AB-SDR loss (row 5) achieved a better WER score than both the SNR and SDR losses (rows 3 and 4) and the baseline observed signals (row 2).

\subsubsection{Multi-talker scenario}
\label{sec:exp_tl_multi}

\begin{table}[t]
  \renewcommand{\arraystretch}{1.0}
  \caption{Results of artifact-boosted training loss (SDR, SIR, SNR, SAR [dB], STOI, PESQ (higher is better) and WER [\%] (lower is better)) for multi-talker setup (WSJ\_CHIME\_MT).}
%\vspace{-1mm}
  \label{tab:result_tl_mt}
  \centering
  \scalebox{0.9}{
  \begin{tabular}{ @{}c | wl{0.8cm} wl{0.9cm} wc{0.4cm} | wc{0.4cm} wc{0.4cm} wc{0.4cm} wc{0.4cm} wc{0.4cm} wc{0.4cm} wc{0.4cm}}
    \toprule
    Id & System & Loss & $\omega_{\text{OA}}$ & SDR & SIR & SNR & SAR & WER & STOI & PESQ \\
    \midrule
    (1) & clean & -- & -- & -- & -- & -- & -- & 3.6 & -- & -- \\
    (2) & noisy & -- & -- & 4.2 & 5.5 & 11.4 & -- & 20.4 & 0.77 & 1.39 \\
    \midrule
    (3) & Enhan\_T & SNR & -- & 9.9 & 15.4 & 22.5 & 12.5 & 19.6 & 0.87 & 1.97 \\
    (4) & & SDR & -- & 9.9 & 15.6 & 22.7 & 12.4 & 19.1 & 0.87 & 1.99 \\
    (5) & & AB-SDR & -- & 8.5 & 10.6 & 16.6 & 17.1 & \textBF{14.9} & 0.85 & 1.75 \\
    \midrule
    (6) & \ \ +OA & SNR & 0.4 & 8.2 & 10.1 & 16.0 & 17.4 & 15.2 & 0.85 & 1.72 \\
    (7) & & SDR & 0.4 & 8.3 & 10.2 & 16.1 & 17.3 & \textBF{14.9} & 0.85 & 1.72 \\
    (8) & & AB-SDR & 0.0\tablefootnote{$\omega_{\text{OA}} = 0.0$ implies that the OA post-processing did not improve the WER scores.} & 8.5 & 10.6 & 16.6 & 17.1 & 14.9 & 0.85 & 1.75 \\
    \bottomrule
    
  \end{tabular}
  }
%\vspace{-1mm}
\end{table}

Table~\ref{tab:result_tl_mt} compares evaluated SE systems trained with the conventional SNR and SDR and the proposed AB-SDR losses in terms of SE and ASR performance for the multi-talker noisy speech dataset, WSJ\_CHIME\_MT.
We used the time-domain SE model (Enhan\_T) and also applied OA post-processing (Enhan\_T+OA) in this experiment.

The table shows that the AB-SDR loss decreases the SIR and SNR scores but monotonically increases the SAR score, and it achieves a better WER score compared with both the observed signals and enhanced signals trained with SNR and SDR losses.
This trend is similar to what has been observed in the single-talker scenario (in Table~\ref{tab:result_tl_st}).

These results further support our hypothesis (i.e., the artifact component has a worse impact on ASR performance) and demonstrate that it would hold not only for the single-talker but also for the multi-talker scenario.
In addition, the results demonstrate that designing the training objective while considering the artifact errors contributes to improving ASR performance with single-channel SE front-ends, regardless of whether the single-talker or multi-talker scenarios is used.

Moreover, Table~\ref{tab:result_tl_st} and Table~\ref{tab:result_tl_mt} show that Enhan\_T applied with both the AB-SDR loss and the OA post-processing (row 8) achieved the best WER score but the performance improvement was not as significant as when using OA with systems trained with SNR or SDR losses. 
This is probably because both approaches produce a similar effect on the resultant evaluated signals, i.e., decreasing the artifact (SAR) errors at the expense of increasing the interference (SIR) and noise (SNR) errors.

In our preliminary experiments, we investigated using the thresholded scale-invariant SAR \cite{roux2019sdr,koizumi2022snri} as a regularization term in addition to the SDR loss.
A careful examination of the results revealed that the SAR of the enhanced signal did not improve for some utterances but became very large for the other utterances. In the latter cases, the high SAR was achieved by generating enhanced signals that were very close to the observed signals $\mathbf{y}$.
Consequently, although the average SAR score improved, it did not lead to improved WER score.

\section{Additional analyses}
\label{sec:exp_add}

\subsection{Evaluation with different speech enhancement front-ends and speech recognition back-ends}

\begin{table}[t]
  \renewcommand{\arraystretch}{1.0}
  \caption{Results of OA post-processing with two SE front-ends and five ASR back-ends (WER [\%] (lower is better)) on WSJ\_CHIME\_ST dataset.}
%\vspace{-1mm}
  \label{tab:result_se_sr}
  \centering
  \scalebox{0.9}{
  \begin{tabular}{ l | c c c c c }
    \toprule
    \multirow{2}{*}{\diagbox{SE}{ASR}} & \multirow{2}{*}{Noisy} & \multirow{2}{*}{Enhan\_T} & \multirow{2}{*}{Enhan\_F} & Noisy &  Noisy \\
    & & & & +Enhan\_T & +Enhan\_F \\
    \midrule
    Noisy & 15.9 & 30.5 & 23.9 & 15.5 & 14.9  \\
    \midrule
    Enhan\_T & 16.6 & 12.9 & 13.5 & 13.7 & 14.8 \\
    \ + OA & 13.0 & 12.9 & 13.3 & \textBF{12.3} & \textBF{12.3} \\
    \midrule
    Enhan\_F & 26.3 & 22.8 & 15.0 & 21.0 & 16.1 \\
    \ + OA & 14.6 & 18.6 & 15.0 & 14.1 & 13.6 \\
    \bottomrule
    
  \end{tabular}
  }
%\vspace{-1mm}
\end{table}

In Section~\ref{sec:exp_oa_tl}, we conducted experiments with the time-domain SE front-end and the ASR back-end trained independently of the SE front-end, i.e., trained with reverberant noisy speech.
In this section, to investigate the influence of the SE model architecture (especially, input and output domains), we also adopt the two SE front-ends: 1) time-domain SE model (Enhan\_T) and 2) STFT-domain SE model (Enhan\_F).
In addition, to investigate the influence of the training data for the ASR back-end, we evaluated the behavior of OA post-processing on the five ASR back-ends, which are trained with 1) the noisy signals (Noisy), 2) the enhanced signals generated by the time-domain SE model (Enhan\_T), 3) the enhanced signals generated by the STFT-domain SE model (Enhan\_F), 4) noisy signals and enhanced signals generated by Enhan\_T, and 5) noisy signals and enhanced signals generated by Enhan\_F.

Table~\ref{tab:result_se_sr} shows the WER scores for the combination of the evaluated SE and ASR systems.
From the table, we can confirm that even when we use the weaker SE front-end (Enhan\_F), the OA post-processing improves the ASR performance.
These results show that mitigating the artifact errors is effective regardless of the SE model architecture.

Moreover, the table shows that by using the ASR back-end trained with the enhanced signals (e.g., Noisy+Enhan\_T), both of the SE front-ends (Enhan\_T and Enhan\_F) obtained the ASR performance improvements compared with the ASR back-end trained independently of the SE front-end (Noisy).
When the ASR back-end is trained exclusively on the matched SE front-end, OA is ineffective, as expected.
However, when the back-end is trained on the combination of noisy and enhanced speech, OA improves performance.
Interestingly, the latter system also achieved the best WER scores.
These results suggest that ASR performance may still suffer from artifact errors not only when using the ASR back-end trained independently of the SE front-end but also when using an ASR back-end specialized for an SE front-end (i.e., trained with the enhanced signals of a specific SE front-end).

\subsection{Evaluation with different speech and noise corpora}

\begin{figure}[t]
  \centering
  \includegraphics[width=1.0\linewidth]{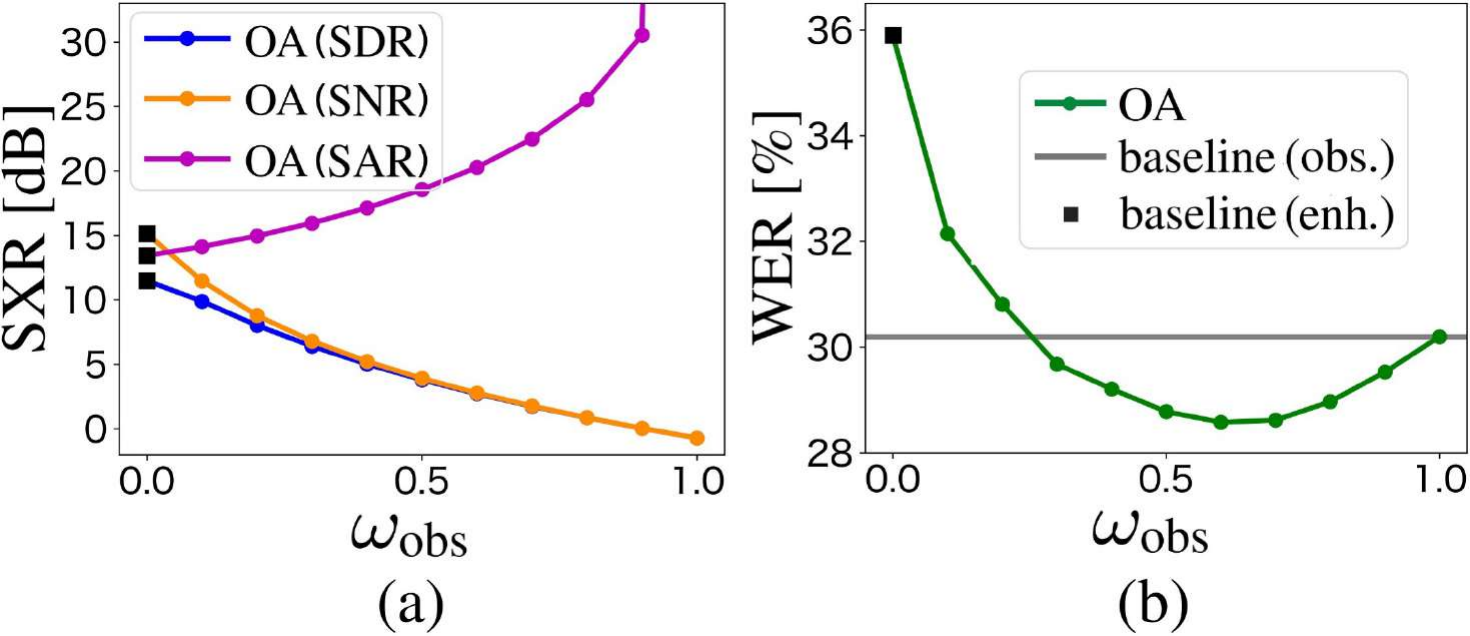}
%\vspace{-1mm}
  \caption{Results of OA post-processing (SDR, SNR, SAR [dB] (higher is better) and WER [\%] (lower is better)) on CSJ\_DEMAND\_ST dataset. SXR denotes values encompassing SDR, SNR, and SAR.}
  \label{fig:result_oa_st_csj_demand}
%  \vspace{-1mm}
\end{figure}

To verify whether the findings obtained by the above experiments hold for different speech and noise conditions, we also evaluated the behavior of OA post-processing on another single-talker noisy speech dataset, CSJ\_DEMAND\_ST, which is created based on the CSJ speech (Japanese Spontaneous speech) and DEMAND noise (18 environments) corpora.
Figure~\ref{fig:result_oa_st_csj_demand}-(a) shows the SE metrics (i.e., SDR, SNR, and SAR), and Figure~\ref{fig:result_oa_st_csj_demand}-(b) shows the ASR metric (i.e., WER) of the enhanced signals modified with OA as described in Section~\ref{sec:oa}, where the interpolation parameter $\omega_{\text{OA}}$ varies within $\{0.0, 0.1, \ldots, 1.0\}$.

The figure shows a similar trend to the results on the WSJ\_CHIME\_ST dataset in Section~\ref{sec:exp_oa_single}: 1) by increasing $\omega_{\text{OA}}$, the SAR monotonically increases at the expense of decreasing the SNR and 2) by adjusting $\omega_{\text{OA}}$ around $0.5$, the modified enhanced signals with OA achieve better WER than both the baseline observed (gray line) and enhanced (black square) signals.
These results further increase the experimental evidence of our hypothesis and show that it would hold for different evaluated datasets.

\subsection{Evaluation with real recordings}

To confirm the effectiveness of our findings even for real recordings, we evaluated the OA post-processing and the artifact-boosted loss training for the real-recorded evaluation data, i.e., ``et05\_real'' of the CHiME-3 dataset.
Table~\ref{tab:result_oa_real} shows the WER scores of the evaluated SE systems trained with different loss functions (SNR, SDR, and AB-SDR) with/without OA post-processing (Enhan\_T and Enhan\_T+OA).
From the table, we observe that both OA post-processing and artifact-boosted loss training, which were shown to improve the SAR (see Sections~\ref{sec:exp_oa} and \ref{sec:exp_tl}), are also effective at improving the WER even when they are applied to real recordings, e.g., by comparing (2) with (5) and comparing (2) with (4), respectively.
These results suggest that the findings based on the orthogonal projection-based analysis of the simulated dataset would hold for real recordings, i.e., mitigating the artifact errors would be a key to improving the ASR performance with the single-channel SE front-ends.

\begin{table}[t]
  \renewcommand{\arraystretch}{1.0}
  \caption{Results of OA post-processing and AB-SDR training loss (WER [\%] (lower is better)) for real recordings (CHiME-3's et05\_real).}
%\vspace{-1mm}
  \label{tab:result_oa_real}
  \centering
  \scalebox{1.0}{
  \begin{tabular}{ c | l l c | c }
    \toprule
    Id & System & Loss & $\omega_{\text{OA}}$ & WER \\
    \midrule
    (1) & noisy & -- & -- & 28.7 \\
    \midrule
    (2) & Enhan\_T & SNR & -- & 31.7 \\
    (3) & & SDR & -- & 30.8 \\
    (4) & & AB-SDR & -- & \textBF{22.3} \\
    \midrule
    (5) & \ \ +OA & SNR & 0.4 & 22.5 \\
    (6) & & SDR & 0.5 & 22.5  \\
    (7) & & AB-SDR & 0.0 & 22.3  \\
    \bottomrule
  \end{tabular}
  }
%\vspace{-1mm}
\end{table}

\section{Conclusion}
\label{sec:conclusion}

In this paper, we analyzed the processing distortion problem induced by single-channel SE that would limit the potential improvement of ASR performance.
We proposed a novel analysis scheme that applied orthogonal projection-based decomposition of SE errors and experimentally found a reasonable signal-level numerical metric, i.e., artifact errors, that could explain the cause of ASR performance degradation.

In addition, we explored two practical approaches to mitigating the effect of artifact errors for improving ASR performance, i.e., OA post-processing and AB-SDR training loss.
We proved that OA can monotonically increase the SAR value under mild conditions.
Furthermore, we experimentally confirmed that both approaches successfully reduced artifact errors and improved the ASR performance of single-channel SE, even for real recordings.
These results further support our finding that artifact errors have a larger impact on the degradation of ASR performance than residual interference and noise errors.

The signal-level interpretation and experimental findings of the processing distortion obtained in this paper will provide novel insights into designing better SE front-ends and open novel research directions for building single-channel noise-robust ASR.
Future works should include extension of this analysis to different front-end (e.g., dereverberation) and back-end (e.g., speaker recognition) tasks, as well as to multichannel scenarios.

% To start a new column (but not a new page) and help balance the last-page
% column length use \vfill\pagebreak.
% -------------------------------------------------------------------------
%\vfill
%\pagebreak

% References should be produced using the bibtex program from suitable
% BiBTeX files (here: strings, refs, manuals). The IEEEbib.bst bibliography
% style file from IEEE produces unsorted bibliography list.
% -------------------------------------------------------------------------
\bibliographystyle{IEEEtran}
\bibliography{mybib}

\end{document}